
\documentclass[letterpaper, 10 pt, conference]{ieeeconf}  

\IEEEoverridecommandlockouts                              

\overrideIEEEmargins                                      



\usepackage{graphicx}      
\usepackage{amsmath, amsfonts}
\usepackage{cite}
\usepackage{color}
\usepackage{multirow}
\usepackage{mathtools}

\usepackage{tikz}
\usetikzlibrary{calc,positioning}
\tikzstyle{solid node}=[circle,draw,inner sep=.75,fill=black]

\newcommand{\R}{\ensuremath{{\mathbb R}}}

\newtheorem{thm}{\bf Theorem}
\newtheorem{assumption}{\bf Assumption}
\newtheorem{lemma}{\bf Lemma}

\newtheorem{remark}{\bf Remark}

\newcommand{\ifnote}[1]{} 

\title{\LARGE \bf
An Output Feedback Stabilizer for MIMO Nonlinear Systems with Uncertain Input Gain: Nonlinear Nominal Model
}

\author{Wonseok Ha$^{1}$ and Juhoon Back$^{1}$
\thanks{*This work was not supported by any organization}
\thanks{$^{1}$W. Ha and J. Back are with the School of robotics, Kwangwoon University, Seoul, Korea. Emails: {\tt\small wonseok00@kw.ac.kr, backhoon@kw.ac.kr}}%
\thanks{Corresponding author: J. Back}%
}

\begin{document}

\maketitle
\thispagestyle{empty}
\pagestyle{empty}

\begin{abstract}
This paper deals with the output feedback stabilization problem of nonlinear multi-input multi-output systems having an uncertain input gain matrix. 
It is assumed that the system has a well-defined vector relative degree and that the zero dynamics is input-to-state stable. Based on the assumption that there exists a state feedback controller which globally asymptotically stabilizes the origin of the nominal closed-loop system, we present an output feedback stabilizer which recovers the stability of the nominal closed-loop system in the semi-global practical sense. Compared to previous results, we allow that the nominal system can have a nonlinear input gain matrix that is a function of state and this is done by modifying the structure of the disturbance observer-based robust output feedback controller. It is expected that the proposed controller can be well applied to the case when the system's nonlinearity is to be exploited rather than canceled.
\end{abstract}
\section{Introduction}
In most practical control systems, only parts of system states can be measured. To control such systems, one can simply construct a static output feedback controller such as proportional controller, or design a dynamic output feedback controller such as proportional-integral controller. When the system model is available, we can construct a state observer to estimate the system state and use the state feedback controller with the estimated state. The latter idea is one of standard solutions for linear systems and the separation principle guarantees that the idea actually works \cite{Kailath1980}.


The idea of combining a state feedback controller and a state estimator does not work well for general nonlinear systems and actually it is quite challenging as explained in books and papers including \cite{B_Khalil,bIsidori,Teel94,AK1999}. Fortunately, it turns out that the separation principle holds true if we consider a (possibly very large) compact region of attraction rather than the whole state space and the basic idea is to employ a high-gain observer whose convergence speed can be tuned arbitrarily fast and devise a structure that prevents possibly very large overshoot of the observer from propagating to the system states; see the works, e.g., \cite{Khalil93, Teel94, Teel94SIAM, Lin1995TAC,FK2008} for details. 
  

The aim of this paper is to develop an output feedback stabilizing controller for multi-input multi-output (MIMO) nonlinear systems which contain uncertainties in the input gain matrix. Although uncertainties in the input gain have been considered in many studies including \cite{Nussbaum1983,KKK95,Xu2003TAC}, most of them assume that state variables of the system are available for feedback or considered single input systems. In MIMO systems, the uncertainty in the input gain makes the control problem very hard because the actual direction of an input applied to the system as well as the magnitude can be different from what is expected while in the single input case the direction is fixed. When we consider the output feedback case, the problem becomes obviously more difficult.

Some relevant results on the problem are available in the literature. In \cite{Ge2014TAC}, an adaptive output feedback controller is presented for MIMO nonlinear systems with input uncertainties. It is assumed that the system has an identical relative degree between inputs and outputs, and the input uncertainty is composed of  an uncertain diagonal term that depends on the states and an unknown constant matrix. More recently, a robust output feedback stabilizer has been proposed for a partially feedback linearizable system having a vector relative degree and an uncertain input gain matrix in \cite{Wang2015}. The uncertainty is assumed to be close to a constant known matrix (called `guess of input gain') in some sense and an extended high gain observer is employed to estimate the high derivatives of outputs so that the effect of uncertainties can be estimated. A disturbance observer-based solution has been proposed in \cite{BackShim2009}, where the uncertain input gain matrix is assumed to belong to a known sector and two filters with sufficiently high bandwidth work together to estimate the effect of uncertainties and external disturbances; see \cite{Ohnishi87} for the idea of a disturbance observer, \cite{Chen+2016TIE} for a recent survey, and \cite{Shim+2016CTT} for a tutorial. It is shown that the real closed-loop system behaves like a nominal linear system for whole time horizon. As in the work \cite{Wang2015}, a constant gain matrix is introduced in \cite{BackShim2009} which can be regarded as a nominal gain or a target gain for the nominal closed-loop system. 

In this paper, we would like to consider more general case where the nominal system has a nonlinear nominal gain matrix that depends on the system state rather than a constant one. In many systems such as robot manipulators and aerial vehicles, the input gain matrix depends on the posture or attitude of the system and thus the input gain is naturally nonlinear. If some uncertainties in system parameters such as length and mass are present, which is often the case, the input gain matrix becomes uncertain. In this case, if we apply control strategies that rely on a linear nominal model, then the discrepancy between the actual system and the nominal one will be considerably large, resulting unnecessarily large control effort to compensate for the effect of uncertainties. Thus, our motivation is to consider more realistic assumption on the systems and allow the compensated system to have possibly good intrinsic nonlinearities. 

The contribution of this paper can be summarized as follows. Firstly, we present a new structure of the disturbance observer-based controller for MIMO nonlinear systems that have uncertainties in the input gain matrix whose nominal counterpart can be a nonlinear function. Secondly, a new type of the uncertainty on the input gain matrix involving the nonlinear nominal model is presented. Finally, it is shown that the separation principle holds in our case in the sense that the state feedback controller for the given nominal system and the observer or the filter that estimates system outputs and the effect of uncertainties can be designed separately. 

%
%
%

The rest of this paper is organized as follows. The output feedback stabilization problem for MIMO nonlinear systems is formulated in Section \ref{sec:problem_formulation}. A robust output feedback controller is proposed and the stability of the closed-loop system is analyzed in Section \ref{sec:main_result}. An example to validate the proposed method is given in Section \ref{sec:example}. Finally, conclusions and outlooks are presented in Section \ref{sec:conclusion}.

{\bf Notation}:
For two vectors $x\in \mathbb R^{n_1}$ and $y\in \mathbb R^{n_2}$, the concatenated vector $[x;y]$ is defined by $[x;y]=\begin{bmatrix}x^\top & y^\top \end{bmatrix}^\top$.
Concatenation of multiple vectors (or scalars) are defined similarly.
Given $n$ vectors $x_1, \dots, x_n$ in $\mathbb R^\nu$, $x_{[i]}\in \mathbb R^n$ is defined by $x_{[i]}=\begin{bmatrix} x_{1i}; x_{2i}; \cdots; x_{ni}\end{bmatrix}$. For two sets $\Omega_x$ and $\Omega_y$, the Cartesian product of sets $\Omega_x \times \Omega_y$ is denoted by $\Omega_{(x,y)}$. For a $k$-times  differentiable function $x(t)$, $x^{(k)}(t)$ denotes $\frac{d^k x}{dt^k}(t)$. The vector $0_k$ stands for the zero vector in $\R^k$.

\section{Problem  Formulation}	\label{sec:problem_formulation}
We consider a nonlinear system which has $m$ inputs and $m$ outputs and admits a well-defined vector relative degree $\{ \nu_1, \nu_2, \dots, \nu_m \}$. This system can be written in the Byrnes-Isidori normal form \cite{bIsidori}  
\begin{equation}	\label{eq:system}
\begin{split}
\dot z &= F_0(z,x)\\
\dot x &= Ax + B(F(z,x) + G(z,x,t)u)\\
y &= Cx
\end{split}
\end{equation}
where $u\in \R^m$ is the control input, $y \in \R^m$ is the system output, and $x \in \R^\nu$ and $z \in \R^{n-\nu}$ are system states with $\nu = \nu_1+\cdots+\nu_m$.  
The matrices $A$, $B$, and $C$ are given by
\begin{align*}
A &= {\rm diag}\{A_1, \dots, A_m\}\\
B &= {\rm diag}\{B_1, \dots, B_m\}\\
C &= {\rm diag}\{C_1, \dots, C_m\}
\end{align*}
where
\begin{align*}
A_i &:= \begin{bmatrix} 0_{\nu_i-1} & I_{\nu_i-1} \\ 0 & 0_{\nu_i-1}^{\top} \end{bmatrix},  B_i := \begin{bmatrix} 0_{\nu_i-1} \\ 1\end{bmatrix}, 
C_i := \begin{bmatrix} 1 & 0_{\nu_i-1}^{\top} \end{bmatrix}.
\end{align*}
For convenience, we decompose the state vector $x$ as $x = \begin{bmatrix}x_1;x_2;\cdots;x_m\end{bmatrix}$ with $x_i\in \R^{\nu_i}$ being given by
$x_i = \begin{bmatrix}x_{i1};x_{i2};\dots;x_{i\nu_i}\end{bmatrix}$.

It is assumed that the functions $F_0(z,x)$ and $F(z,x)$ are smooth. We assume that, for any $(z,x,t)\in\R^{n+1}$, the input gain matrix $G(z,x,t)$ is smooth and invertible. It is noted that the uncertainty in $G$ is the main concern of this paper and we do not consider uncertainties in $F_0$ or $F$ for the sake of simplicity. 

We introduce a nominal gain matrix $\bar G(z,x)$ which is a nominal function for $G(z,x,t)$ (hence known) and assume that it is smooth for any $(z,x)\in\R^n$. With $\bar G$, a nominal closed-loop system is defined by
\begin{equation}	\label{eq:nominal}
\begin{split}
\dot {\bar z} &= F_0({\bar z}, {\bar x})\\
\dot {\bar x} &= A\bar x + B(F(\bar z,\bar x)+\bar G(\bar z,\bar x)u_r)
\end{split}
\end{equation}
where $u_r$ is the control input that has been designed for the nominal system. 
\begin{assumption}	\label{ass:ISS}
The system $\dot z = F_0(z,u_z)$ is input-to-state stable with respect to the input $u_z$ \cite{Sontag95}. Precisely, for any initial state $z(0) \in \mathbb R^{n-\nu}$ and any piecewise continuous bounded input $u_z$, the solution $z(t)$ exists for all $t \ge 0$ and there exists a class $\mathcal {KL}$ function $\beta_z$ and a class $\mathcal K$ function $\gamma_z$ such that  
\begin{align*}
 \quad\| z(t) \| \le \beta_z(\| z(0) \|, t) + \gamma_z \left( \sup_{0\le s \le t} \|u_z(s)\| \right), \forall t\ge 0. \ \diamond
\end{align*}
\end{assumption}

\begin{assumption}	\label{ass:nominal_control_input}
The origin of the nominal system \eqref{eq:nominal} can be made globally asymptotically stable by a state feedback control of the form 
$u_r= U_r(\bar z, \bar x)$. $\hfill$$\diamond$

By the converse Lyapunov theorem \cite[Theorem 4.17]{B_Khalil}, Assumption \ref{ass:nominal_control_input} ensures that there exists a smooth, positive definite function $\bar V(\bar z, \bar x)$ such that 
\begin{equation}		\label{eq:Lyapunov_theory}
\begin{split}
&\alpha_1 \left(\left\|[ \bar z ; \bar x ]\right\| \right)\le \bar V(\bar z,\bar x) \le \alpha_2 \left(\left\|[ \bar z ; \bar x ]\right\| \right) \\
&\begin{multlined} \frac{\partial \bar V}{\partial \bar x}[ A\bar x + B(F(\bar z, \bar x)+\bar G(\bar z, \bar x) U_r(\bar z, \bar x)] \\ + \frac{\partial \bar V}{\partial \bar z} F_0(\bar z, \bar x) \!\le\! -\alpha_3(\|\bar z;\bar x\|) \end{multlined}
\end{split}
\end{equation}
where $\alpha_1$ and $\alpha_2$ are class $\mathcal K_\infty$ functions, and $\alpha_3$ is a class $\mathcal K$ function.
\end{assumption}



The main concern of this paper is to deal with the uncertainty in the input gain matrix $G(z,x,t)$. We allow that the nominal input gain matrix, denoted by $\bar G(z,x)$, can be a nonlinear function of state. This can be regarded as a generalization of the assumption introduced in \cite{Wang2015}, saying that there exists a constant matrix ${\bf G}$ such that $
\max_{\Lambda \, {\rm diagonal}, |\Lambda| \le 1} |[G(z,x)-{\bf G}]\Lambda {\bf G}^{-1}| < 1, \forall (z,x) \in \R^n$.
 
\begin{assumption}	\label{ass:contraction}
The uncertain input gain matrix $G(z,x,t)$ and its nominal one $\bar G(z,x)$ are close in the sense that 
\begin{align*}
\|I- G(z,x,t)\bar G^{-1}(z,x)\| \le \mu < 1, \forall (z,x)\in \R^n,  \forall t.\quad \diamond
\end{align*}
\end{assumption}
\medskip

The objective of this paper is to design a dynamic output-feedback  controller 
which makes the origin of the closed-loop system 
semi-globally practically stable despite the uncertainty in the input gain matrix.
By the semi-global practical stability, we mean that for any given initial condition sets $S_z$ and $S_x$ of $z$ and $x$, respectively, we can design the controller so that the trajectory of the closed-loop system is uniformly ultimately bounded and the size of the ultimate bound can be tuned arbitrarily. 

\section{Main Result}	\label{sec:main_result}
In this section, we firstly propose a robust output feedback stabilizer with a design procedure and then analyze the stability of the closed-loop system.

\subsection{Proposed Controller}\label{sec:Proposed_Controller}
We present a controller given by
\begin{equation}	\label{eq:proposed_controller}
\begin{split}
\dot{\bar z} &= F_0(\bar z, \phi(q)) \\
\dot q&= A_{a\tau}q + B_{a\tau}^q y\\
\dot p &= A_{a\tau}p + B_{a\tau}^p \bar G(\bar z, \phi(q)) u\\
w &= Cp+ B^\top B_{a\tau}^q (Cq-y)+F(\bar z,\phi(q)) \\
u &= \bar G^{-1}(\bar z,\phi(q))\Phi(w)+U_r(\bar z, \phi(q))
\end{split}
\end{equation}
where $\bar z \in \mathbb R^{n-\nu}$, $q\in \mathbb R^\nu$, and $p\in \mathbb R^\nu$ are internal states of the controller. Noting that $q$ and $p$ have the same dimension as $x$, we decompose $q$ and $p$ like $x$, e.g., $q=\begin{bmatrix}q_1;\cdots; q_m\end{bmatrix}$ with $q_i=\begin{bmatrix}q_{i1};\cdots; q_{i\nu_i}\end{bmatrix}$. 

The controller involves several matrices ($A_{a\tau}$, $B_{a\tau}^q$, $B_{a\tau}^p$) and functions ($\Phi$, $\phi$) to be determined and in what follows we explain how to construct them.

Firstly, we explain the dynamics of $q$ and $p$. They share the same system matrix given by $A_{a\tau}= {\rm diag}\{A_{a_1\tau}, \dots, A_{a_m\tau}\}$ where
\begin{align*}
A_{a_i\tau} &= \begin{bmatrix} \begin{array}{c} -\frac{a_{i\nu_i}}{\tau}\\ \vdots \end{array}  & I_{\nu_i-1} \\  -\frac{a_{i1}}{\tau^{\nu_i}} &  0_{\nu_i-1}^{\top}  \end{bmatrix} \in \R^{\nu_i \times \nu_i}.
\end{align*}
The input matrices $B_{a\tau}^q$ and $B_{a\tau}^p$ are given by 
$B_{a\tau}^q= {\rm diag}\{B_{a_1\tau}^q, \dots, B_{a_m\tau}^q\}$ and $B_{a\tau}^p= {\rm diag}\{B_{a_1\tau}^p, \dots, B_{a_m\tau}^p\}$  where
\begin{align*}
B_{a_i\tau}^q = \begin{bmatrix}\frac{a_{i\nu_i}}{\tau};\cdots;\frac{a_{i1}}{\tau^{\nu_i}}\end{bmatrix}, \quad B_{a_i\tau}^p = \begin{bmatrix}0;\cdots;0;\frac{a_{i1}}{\tau^{\nu_i}}\end{bmatrix}.
\end{align*}
The constants $a_{i1},\dots, a_{i\nu_i}$, and $\tau$ in the matrices are design parameters to be chosen later. 

\begin{remark}Note that the dynamics of $q$ can be decomposed into  $\dot q_i = A_{a_i\tau}q_i + B_{a_i \tau}^q y_i $ and this is the well known high-gain observer which estimates $y_i$ and its time derivatives. In addition, the dynamics of $p_i$ is obtained as $\dot p_i = A_{a_i\tau}p_i + B_{a_i\tau}^p \bar G_i(\bar z, \phi(q)) u$ where $\bar G_i(\bar z, \phi(q))$ is the $i$th row vector of $\bar G_i(\bar z, \phi(q))$, and this subsystem takes $\bar G_i(\bar z, \phi(q)) u$ as the input, and the signal $p_{i1}=C_ip_i$ (can be regarded as the output of $p_i$-dynamics) together with the signal $B_i^\top B_{a_i\tau}^q(C_i q_i -y_i)$ which is nothing but, $\dot q_{i\nu_i}$ is used to generate a feedback signal $w_i$. This signal can be regarded as an estimate of the effect of the discrepancy between $G$ and $\bar G$; see for example \cite{BackShim2009}.   Fig. \ref{fig:OFDOB} shows the structure of the proposed controller and the flow of signals. Here, $P$ denotes the system \eqref{eq:system}. \hfill $\diamond$
\end{remark}

Now, we provide the details on  the design of parameters $a_{i1},\dots, a_{i\nu_i}$. For simplicity, let $a_i = \begin{bmatrix} a_{i1} ; a_{i2} ; \cdots ; a_{i\nu_i} \end{bmatrix}$, $i=1,\dots,m$. For each $i$, choose $a_{i2},\dots, a_{i\nu_i}$ such that
\begin{equation}	\label{eq:a_i}
s^{\nu_i-1} + a_{i\nu_i} s^{\nu_i-2} + \cdots + a_{i2} = 0
\end{equation}
has all roots in the left half complex plane. With  $a_{i2}, \dots, a_{i\nu_i}$ fixed, we choose $a_{i1}$ as follows.
Define $D(1-\mu ,1+\mu)$ by the closed disk in the complex plane whose center and diameter are the midpoint of $-1/(1-\mu)+j0$ and $-1/(1+\mu)+j0$ and the distance between them, respectively. 
Let
\begin{equation}	\label{eq:tf}
H_i(s) = \frac{1}{s}\frac{a_{i1}}{s^{\nu_i-1} + a_{i\nu_i} s^{\nu_i-2} + \cdots + a_{i2}}
\end{equation}
and find a positive constant $a_{i1}$ such that the Nyquist plot of $H_i(s)$ lies strictly outside of the disk $D(1-\mu,1+\mu)$ and does not surround the disk.
Note that such $a_{i1}$ always exists since the Nyquist plot of $H_i(s)$ is bounded to the left half plane and sufficiently small $a_{i1}$ will reduce the size of the Nyquist plot so that the disk $D(1-\mu,1+\mu)$ does not touch the Nyquist plot, where $\mu$ is given in Assumption \ref{ass:contraction}. From the Nyquist stability theory, $a_{i1}$ guarantees that all the roots of $s^{\nu_i} + a_{i\nu_i} s^{\nu_i-1} + \dots + a_{i2}s+a_{i1} = 0$ belong to the left half complex plane.

We are now ready to determine the functions $\phi$ and $\Phi$. They are chosen as continuously differentiable saturation functions and the saturation levels are found considering the dynamics of the nominal system, the sizes of initial condition sets, and the Lyapunov function associated to the nominal closed-loop system. 

To proceed, let  $\delta_{\bar z}>0$, $\delta_{z}>0$, and $\delta_{x}>0$ be given and define $S_{\bar z}= \{\bar z  \in \R^{n-\nu} | \|\bar z\| \le \delta_{\bar z} \}$, $S_{z}= \{z  \in \R^{n-\nu} | \|z\| \le \delta_z \}$, and $S_x= \{x  \in \R^\nu | \|x\| \le \delta_x \}$. For a given constant $\delta>0$, define the level set $\Omega_{(\bar z, x)}(\delta)=\{[\bar z; x] \in \R^n \big| \bar V(\bar z, x) \le \delta\}$ where $\bar V$ is the Lyapunov function from \eqref{eq:Lyapunov_theory}.

Suppose the initial conditions $\bar z(0)$, $z(0)$ and $x(0)$ belong to  $S_{\bar z}$, $S_{z}$, $S_{x}$, respectively. Let  $l>0$ be any constant such that the level set $\Omega_{(\bar z, x)}(l)=\{[\bar z; x] \in \R^n \big| \bar V(\bar z, x) \le l\}$ strictly contains  $S_{\bar z}\times S_{x}$ in a way that the boundaries of $\Omega_{(\bar z,x)}(l)$ and $S_{\bar z}\times S_{x}$ are disjoint. 

From Assumption \ref{ass:ISS}, we have $\|z(t)\| \le \beta_z (\delta_z,0) + \gamma_z(\alpha_1^{-1}(l))=:l_z, \forall t\ge0$ provided that $(\bar z(t);x(t)) \in \Omega_{\bar z,x}(l)$, where $\alpha_1$ comes from \eqref{eq:Lyapunov_theory}.
Let $\delta_1$ and $\delta_w$ be positive constants. Define $\bar l=\alpha_2(\alpha_1^{-1}(l)+\delta_1)$ and let $L_{F}$ be the Lipschitz constant of $F(\bar z, x)$ over $\Omega_{(\bar z, x)}(\bar l)$.  With $\Omega_{(\bar z,x)}(\cdot)$ and the bound of $ z(t)$, we define differentiable saturation functions $\phi$ and $\Phi$ by
\begin{equation}
\begin{split}
(\bar z,\phi(x)) &= (\bar z,x), \quad \forall (\bar z,x) \in \Omega_{(\bar z, x)}(\bar l)\\
\Phi(w) &= w,\quad \forall w\in \Omega_w \\
\left\|\frac{\partial \phi}{\partial x}(x)\right\| &\le 1, \forall x\in \R^\nu\ {\rm and} \ \left\|\frac{\partial \Phi}{\partial w}(w)\right\| \le 1, \forall w\in \R^m  \!\!
\end{split}
\end{equation}
where 
\begin{align*}
\Omega_w&=  \Big\{ w\in \R^m| w(\bar z, z, x,\tilde w)=\\
& \quad\quad \bar G(\bar z, x) G^{-1}(z,x,t)\big[ F(\bar z, x)-F(z,x) \\&\hspace{2cm} + (\bar G(\bar z, x)-G(z,x,t)) U_r(\bar z, x)\big]+\tilde\omega, \\&\quad \quad   \|  z\|\in l_{  z},  (\bar z, x)\in \Omega_{(\bar z, x)}(l), \\ & \hspace{2cm} {\rm all\ possible\ }G,  \|\tilde\omega\|\le \delta_w  +L_F \delta_1 \Big\}.
\end{align*}

One can find the saturation level for $\Phi$ by computing the maximum of $w(\bar z, z,x)$ over all possible $(\bar z, z,x)$ and $G$ under consideration. In fact, any bounded saturation level which can cover the set $\Omega_w$ defined above  will do the job.

Based on the discussion so far, we state the main stability result of this paper. 

%
\begin{thm}	\label{thm:main_result}
Consider the closed-loop system \eqref{eq:system} and  \eqref{eq:proposed_controller}. Let $S_p$ and $S_q$ be the compact sets containing the origin and assume that $p(0) \in S_p$ and $q(0) \in S_q$. Let $l>0$ be such that $S_{\bar z} \times S_x \subset \Omega_{(\bar z, x)}(l)$ and the boundaries of $S_{\bar z} \times S_x$ and $\Omega_{(\bar z,x)}(l)$ are disjoint. Under Assumptions \ref{ass:ISS}-\ref{ass:contraction}, the controller \eqref{eq:proposed_controller} 
whose design parameters are chosen from the procedure described in this section ensures the following. For a given $\epsilon>0$, there exists $\tau^*>0$ and $T_\epsilon>0$ such that, for each $0<\tau<\tau^*$, the solution of the closed-loop system denoted by $[\bar z(t);z(t);x(t);q(t);p(t)]$ is uniformly bounded and the solution $x(t)$ satisfies 
\begin{equation}\label{eq:PracticalStability}
\|x(t)\|\le \epsilon, \quad \forall t\ge T_\epsilon.
\end{equation}
\hfill$\square$
\end{thm}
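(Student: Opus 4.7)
The plan is a singular-perturbation/high-gain observer argument in which $\tau$ plays the role of the small parameter. First I would move into scaled error coordinates that separate a fast boundary-layer system (the observer error $q-x$ and a filter error for $p$) from a slow subsystem living on the quasi-steady-state manifold. For the observer part, introduce $T_{a_i}=\mathrm{diag}(\tau^{\nu_i-1},\tau^{\nu_i-2},\ldots,1)$ and set $\bar q_i=T_{a_i}^{-1}(q_i-x_i)$. A direct computation gives $\dot{\bar q}_i=\tfrac{1}{\tau}M_{a_i}\bar q_i-B_i\bigl(F_i(z,x)+G_i(z,x,t)u\bigr)$, where $M_{a_i}$ is Hurwitz by the choice of $a_{i2},\dots,a_{i\nu_i}$ via \eqref{eq:a_i}. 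A similar scaling for $p$ isolates the filter dynamics. The identity $\tfrac{a_{i1}}{\tau^{\nu_i}}(q_{i1}-y_i)=-\dot q_{i\nu_i}$ allows rewriting $w_i=p_{i1}-\dot q_{i\nu_i}+F_i(\bar z,\phi(q))$, which is the clean form needed for the asymptotic analysis.

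Next I would identify the quasi-steady-state $\bar w$ of $w$ by sending $\tau\to 0$ in the transfer function $C_i(sI-A_{a_i\tau})^{-1}B^{p}_{a_i\tau}$, which tends uniformly to the identity on bounded $s$; this gives $p_{i1}\to\bar G_i u$ while $\dot q_{i\nu_i}\to F_i(z,x)+G_i u$ on the slow manifold, so that formally $\bar w=-F(z,x)+F(\bar z,x)+(\bar G-G)u$. Substituting $u=\bar G^{-1}\Phi(\bar w)+U_r$ into this fixed-point equation yields exactly the $\bar w(\bar z,z,x,0)$ used to define $\Omega_w$, and a direct algebraic computation then shows that with this $\bar w$ (and in the unsaturated region where $\Phi$ and $\phi$ are transparent) the actual $x$-dynamics reduce to $\dot x=Ax+B(F(\bar z,x)+\bar G(\bar z,x)U_r(\bar z,x))$. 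Together with $\dot{\bar z}=F_0(\bar z,x)$, this reduced system is the nominal closed-loop \eqref{eq:nominal} with the identification $\bar x\leftrightarrow x$, so Assumption~\ref{ass:nominal_control_input} and the Lyapunov function $\bar V$ control it. Assumption~\ref{ass:ISS} then supplies the bound $l_z$ on $z(t)$.

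The main obstacle is the stability of the fast subsystem, because the control loop closes through the uncertain factor $G\bar G^{-1}$. I would model the residual $\tilde w=w-\bar w$ together with $\bar q$ as an interconnection whose loop gain, at frequency $s/\tau$, is $(I-G\bar G^{-1})H_i(s)$. Assumption~\ref{ass:contraction} bounds the multiplicative uncertainty by $\mu<1$, and the Nyquist-type design of $a_{i1}$ was engineered so that $H_i(s)$ stays strictly outside the disk $D(1-\mu,1+\mu)$ without encircling it; invoking the Nyquist criterion channel-wise yields robust exponential stability of the boundary-layer system, uniformly over $(\bar z,z,x)$ in the bounded region of interest. A small-gain/Lyapunov converse then produces a quadratic Lyapunov function $W(\bar q,\tilde w)$ with $\dot W\le -\tfrac{c_1}{\tau}W+c_2\|[\dot{\bar z};\dot x;\dot q]\|$, so that on the time scale $\tau$ the fast variables contract to an $O(\tau)$ tube.

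Finally I would combine the two pieces in the standard two-time-scale manner. Before the boundary-layer has decayed, $\phi$ and $\Phi$ clip the arguments entering the slow dynamics, so the peaking of $\bar q$ cannot drive $(\bar z,x)$ out of the slightly enlarged level set $\Omega_{(\bar z,x)}(\bar l)$ during a window of length $O(\tau\ln(1/\tau))$; this is where the saturation levels designed through $\Omega_w$ are essential. After this short transient, the fast variables lie in an $O(\tau)$ neighborhood of the quasi-steady-state manifold, the saturations are inactive on $\Omega_{(\bar z,x)}(\bar l)$, and the slow dynamics become an $O(\tau)$ perturbation of the nominal GAS closed-loop. Differentiating $\bar V$ along trajectories, the bound in \eqref{eq:Lyapunov_theory} yields $\dot{\bar V}\le -\alpha_3(\|[\bar z;x]\|)+O(\tau)$, which first keeps the trajectory inside $\Omega_{(\bar z,x)}(l)$ for all $t\ge 0$ and then drives $[\bar z;x]$ into a ball of radius $\alpha_1^{-1}\!\circ\alpha_2\!\circ\alpha_3^{-1}(O(\tau))$. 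Choosing $\tau^*$ small enough that this ball is contained in $\{x:\|x\|\le\epsilon\}$ yields \eqref{eq:PracticalStability} after some time $T_\epsilon$, and uniform boundedness of $(\bar z,z,x,q,p)$ follows from the invariance of $\Omega_{(\bar z,x)}(l)$ combined with Assumption~\ref{ass:ISS} and the exponential decay of the scaled fast variables.
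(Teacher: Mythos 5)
Your proposal follows essentially the same route as the paper: scaled observer/filter error coordinates yielding a singularly perturbed system, identification of the quasi-steady state of $w$, robust exponential stability of the boundary layer obtained from the disk condition $D(1-\mu,1+\mu)$ on $H_i(s)$, saturations to block peaking during the $O(\tau\ln(1/\tau))$ transient, and positive invariance of $\Omega_{(\bar z,x)}(l)$ followed by an $O(\tau)$ ultimate bound on the fast variables that makes $\dot{\bar V}\le -\alpha_3(\|[\bar z;x]\|)+O(\tau)$. The one point to tighten is the boundary-layer stability step: since $I-G\bar G^{-1}$ is a full, time-varying matrix, a channel-wise Nyquist argument does not apply directly; the paper instead shows the feedback nonlinearity $\Psi_t$ lies in the sector $[1-\mu,1+\mu]$ and invokes the circle criterion, for which your disk-avoidance condition on each $H_i$ is exactly the required frequency-domain hypothesis.
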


\begin{figure}[]
\centering
\begin{tikzpicture}[
auto,
thick,
sum/.style={circle, draw, inner sep=2pt, minimum size=2mm},
block/.style={rectangle, draw, minimum width=7.5mm, minimum height=5mm, align=center, font=\footnotesize},
block_s/.style={rectangle, draw, minimum width=7.5mm, minimum height=5mm, align=left, font=\footnotesize},
block_sat/.style={rectangle, draw, minimum width=8mm, minimum height=6mm, align=left, font=\footnotesize},
input/.style={coordinate},
output/.style={coordinate},
branch/.style={circle,inner sep=0pt,minimum size=0.5mm,fill=black,draw=black},
point/.style={circle,inner sep=0pt,minimum size=0mm,fill=black,draw=black},
skip loop/.style={to path={-- ++(0,#1) -| (\tikztotarget)}}
]

\draw
	node[input]	(input)																{}
	node[block]	(bar_G)			[right	=	4.5mm of input]						{$\bar G$}
	node[block]	(inv_bar_G)	[right	=	22mm of input]						{$\bar G^{-1}$}
	node[block_sat](sat)			[right	=	12mm of input, yshift = -7.75mm, label=left:$\Phi(w)$]{}
	node[block]	(P)				[right	=	60mm of input]						{$P$}
	node[block]	(p_dynamics)	[below	=	5mm of input, xshift=35mm]		{$\dot p = A_{a\tau}p + B_{a\tau}^p \bar G u$}
	node[block_s] (q_dynamics)	[below=	7mm of p_dynamics, xshift=27mm]		{$\begin{aligned}
\dot q &= A_{a\tau}q + B_{a \tau}^q y \\ \dot{\bar z} &= F_0(\bar z, \phi(q))\end{aligned}$}
	node[sum]		(sum0)			[right	=	15mm of input]	{}
	node[sum]		(sum1)			[below	=	1.1cm of p_dynamics]		{}
	node[output]	(output)		[right	=	14mm of P]								{}
	node[branch]	(u)				[right	=	34.625mm of input, label=above:$u$]	{}
	node[branch]	(y)				[right	=	10mm of P, label=above:$y$]			{}
	node[point]	(x0)			[left = 	-.5mm of sat]		{}
	node[point]	(x1)			[right = 	-.5mm of sat]		{}
	node[point]	(y0)			[below = 	-.5mm of sat]		{}
	node[point]	(y1)			[above = 	-.5mm of sat]		{}
	node[point]	(s1)			[below = 	1.5mm of x0, xshift = .5mm]		{}
	node[point]	(s2)			[above = 	1.5mm of x1, xshift = -.5mm]		{};
\path[font=\footnotesize]
	(input)			edge[->]	node{$u_r$}			(bar_G)
	(bar_G)			edge[->]							(sum0)
	(sum0)			edge[->]							(inv_bar_G)
	(inv_bar_G)	edge[->]							(P)
	(P)				edge[->]							(output)
	(u)				edge[->]							(p_dynamics)
	(p_dynamics)			edge[->]	node{$p_{[1]}$}	(sum1)
	(q_dynamics)			edge[->,near start]	node{$\begin{aligned}
&\;\; F(\bar z, \phi(q))\\&-[\dot q_{1\nu_1};\cdots;\dot q_{m\nu_m}]\end{aligned}$} (sum1)
	(sat)			edge[->]		(sum0)
	(x0)			edge[->]		(x1)
	(y0)			edge[->]		(y1);
\draw[->]	(y)		|-	(q_dynamics);
\draw[->,font=\footnotesize]	(sum1)	-|	node[label=30:$w$]{}	(sat);
\draw	(s1) -- +(2mm,0mm) -- ++(4.625mm,3.5625mm) -- (s2);

\end{tikzpicture}
\caption{Structure of the proposed robust controller.}\label{fig:OFDOB}
\end{figure}

\subsection{Stability analysis of the closed-loop system}
In this subsection, a proof of Theorem \ref{thm:main_result} is provided. We start by rewriting the system in new coordinates $(\xi,\eta)$ defined by  
\begin{align*}
\xi &=[\xi_1;\cdots;\xi_m] \in \R^{\nu},\quad  \xi_i = [\xi_{i1};\cdots;\xi_{i\nu_i}]\\ 
\eta &=[\eta_1;\cdots;\eta_m] \in \R^{\nu}, \quad \eta_i = [\eta_{i1};\cdots;\eta_{i\nu_i}]
\end{align*}
where $ i = 1, \dots, m$, and the $j$th components of $\xi_i$ and $\eta_i$ are given by
\begin{equation}	\label{eq:transformation}
\begin{split}
\xi_{ij} &= \frac{1}{\tau^{\nu_i-j}}(q_{ij}-x_{ij}), \ \eta_{ij} = \tau^{j-1}(p_{i1}^{(j-1)}-q_{i\nu_i}^{(j)}).
\end{split}
\end{equation}

For a given $\tau>0$, let $\Delta_i = {\rm diag}\{\tau^{\nu_i-1}, \dots, \tau, 1\}$. With $\Delta_i$, the variable $\xi_i$ is expressed compactly as  
$\xi_i = \Delta_i^{-1}(q_i- x_i)$, and $\xi$ can be written as $\xi =\Delta^{-1}(q- x)$ where  $\Delta = {\rm diag}\{\Delta_1, \dots, \Delta_m\}$. 

For convenience, we define $\mathsf a ={\rm diag}\{a_1^\top, \dots, a_m^\top\}$ and $\mathsf a_{[1]}={\rm diag}\{a_{11}, \dots, a_{m1}\}$.  
\begin{lemma}	\label{lem:lemma1}
The dynamics of $(\xi,\eta)$ is given by
\begin{equation}	\label{eq:xi_and_eta_dynamics}
\begin{split}
\tau\dot\xi &= A_\xi\xi- \tau B\Theta_\xi \\
\tau\dot\eta &= A\eta +B \Theta_\eta 
\end{split}
\end{equation}
where
\begin{align*}
&A_\xi = A_{a\tau}\big|_{\tau = 1} \\
&\Theta_\xi(z,x,\bar z, \xi, \eta,t)=  F(z,x) +G(z\! ,\! x,\! t)U_r(\bar z, \phi (x+\Delta\xi)) \\
&\ \ + G(z\! ,\! x,\! t)\bar G^{-1}(\bar z, \phi(x\!+\!\Delta\xi)) \Phi(\eta_{[1]}\!+\!F(\bar z, \phi(x+\Delta\xi)))
\\
&\Theta_\eta (z,x,\bar z, \xi, \eta,t)=  -\mathsf a\eta\\
& \ \ +\mathsf a_{[1]} \{ -F(z,x)+(\bar G(\bar z,\phi(x+\Delta\xi)) - G(z,x,t))u \}.\quad \diamond\!\!\!\!\!
\end{align*}
\end{lemma}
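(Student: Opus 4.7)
The plan is to derive the claimed evolution equations by differentiating the change of variables in \eqref{eq:transformation}, substituting the closed-loop equations \eqref{eq:system} and \eqref{eq:proposed_controller}, and re-expressing everything through the invertible relation $q = x + \Delta\xi$ together with an auxiliary identity $w = \eta_{[1]} + F(\bar z,\phi(q))$ that I will verify at the end.

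I would first dispatch the $\xi$ block. The key observation is the block-wise factorization $A_{a\tau} = A - B_{a\tau}^q C$, which follows because $B_{a_i\tau}^q C_i$ only populates the first column and reproduces precisely the first-column discrepancy between $A_i$ and $A_{a_i\tau}$. Subtracting the $x$-dynamics from the $q$-filter therefore yields $\dot q - \dot x = A_{a\tau}(q-x) - B(F+Gu) = A_{a\tau}\Delta\xi - B(F+Gu)$. Premultiplying by $\Delta^{-1}$ and using the scaling identities $\Delta^{-1}A_{a\tau}\Delta = A_\xi/\tau$ (a direct entrywise degree check on the observer canonical form) and $\Delta^{-1}B = B$ gives $\tau\dot\xi = A_\xi\xi - \tau B(F+Gu)$. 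Substituting the controller $u = \bar G^{-1}\Phi(w) + U_r$ with $\phi(q) = \phi(x + \Delta\xi)$ and $w = \eta_{[1]} + F(\bar z,\phi(q))$ then delivers the claimed $\Theta_\xi$.

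The core difficulty lies in the last row of $\tau\dot\eta$. I would write $\eta_{ij} = \zeta_{ij} - \chi_{ij}$ with $\zeta_{ij} := \tau^{j-1}p_{i1}^{(j-1)}$ and $\chi_{ij} := \tau^{j-1}q_{i\nu_i}^{(j)}$, so that $\tau\dot\eta_{ij} = \eta_{i,j+1}$ for $j<\nu_i$ is immediate and accounts for the $A_i\eta_i$ part. For $j = \nu_i$, I would exploit the scalar ODEs satisfied by $p_{i1}$ and $q_{i\nu_i}$ through the characteristic polynomial $\alpha(s,\tau) := s^{\nu_i} + (a_{i\nu_i}/\tau)s^{\nu_i-1} + \cdots + a_{i1}/\tau^{\nu_i}$ of $A_{a_i\tau}$. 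Since $A_{a_i\tau}$ is in observer canonical form with $B_{a_i\tau}^p$ loading only the last entry and $C_i$ reading the first, one obtains $\alpha(D,\tau)p_{i1} = (a_{i1}/\tau^{\nu_i})\,\bar G_i(\bar z,\phi(q))u$. For $q_{i\nu_i}$, introducing the output estimation error $e_i := q_{i1}-y_i$ and eliminating $q_{i1},\ldots,q_{i,\nu_i-1}$ yields the standard high-gain-observer error equation $\alpha(D,\tau)e_i = -y_i^{(\nu_i)}$ together with $\dot q_{i\nu_i} = -(a_{i1}/\tau^{\nu_i})e_i$, whence $\alpha(D,\tau)q_{i\nu_i} = (a_{i1}/\tau^{\nu_i})\,y_i^{(\nu_i-1)}$. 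Differentiating this last relation once to introduce $y_i^{(\nu_i)} = F_i(z,x) + G_i(z,x,t)u$ (from the Byrnes--Isidori form), multiplying both relations by $\tau^{\nu_i}$, and subtracting, the lower-order terms regroup into $-a_i^\top\eta_i$ (since $a_i^\top$ collects $a_{i1},\ldots,a_{i\nu_i}$ paired with $\eta_{i1},\ldots,\eta_{i\nu_i}$) and the forcing into $a_{i1}[-F_i + (\bar G_i - G_i)u]$, giving $\Theta_{\eta,i}$ blockwise.

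Finally, to justify the identity $w = \eta_{[1]} + F(\bar z,\phi(q))$ used to close $\Theta_\xi$, I would compute $B_i^\top B_{a_i\tau}^q = a_{i1}/\tau^{\nu_i}$ and read off the last row of the $q$-filter $\dot q_{i\nu_i} = (a_{i1}/\tau^{\nu_i})(y_i - q_{i1})$, so that $B_i^\top B_{a_i\tau}^q(C_iq_i - y_i) = -\dot q_{i\nu_i}$ and hence $w_i = p_{i1} + B_i^\top B_{a_i\tau}^q(C_iq_i-y_i) + F_i(\bar z,\phi(q)) = p_{i1} - \dot q_{i\nu_i} + F_i(\bar z,\phi(q)) = \eta_{i1} + F_i(\bar z,\phi(q))$. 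Stacking over $i$ completes the substitution and yields the stated forms of $\Theta_\xi$ and $\Theta_\eta$.
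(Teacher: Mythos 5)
Your derivation is correct and follows essentially the same route as the paper: subtract the plant from the $q$-filter and rescale by $\Delta^{-1}$ for the $\xi$-block, and compare the $\nu_i$-th derivative of $p_{i1}$ with the $(\nu_i+1)$-th derivative of $q_{i\nu_i}$ for the $\eta$-block (your characteristic-polynomial operator $\alpha(D,\tau)$ is just a repackaging of the explicit derivative formulas the paper writes out), closing with the same identity $w=\eta_{[1]}+F(\bar z,\phi(q))$. The only nit is that $\alpha(D,\tau)q_{i\nu_i}=(a_{i1}/\tau^{\nu_i})y_i^{(\nu_i-1)}$ holds only up to a constant of integration, but since you immediately differentiate it the relation you actually use is exact, so nothing is lost.
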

\begin{proof}
From the dynamics of $q$, one has $\dot q_{i\nu_i} = - \frac{a_{i1}}{\tau^{\nu_i}}(q_{i1}-y_i)= - B_i^\top B_{a_i \tau}^q (C_iq_i-y_i)$, and using this fact we can express the signal $w$ as $w= \eta_{[1]}+F(\bar z, \phi(q))$. Then, the control input $u$ becomes
\begin{equation}	\label{eq:control_input_with_xi_eta}
\begin{split}
u &= \bar G^{-1}(\bar z, \phi(x+\Delta\xi))\Phi(\eta_{[1]}+F(\bar z, \phi(x+\Delta\xi))) \\
&\quad + U_r(\bar z, \phi (x+\Delta\xi)).
\end{split}
\end{equation}
The dynamics of $\xi$ is derived as
\begin{align*}
\dot\xi & = \Delta^{-1}(\dot q - \dot x) \\
&= \Delta^{-1}\left\{ A_{a\tau} q \!+\! B_{a\tau}^qCx \!-\! Ax \!-\! B(F(z,x)\!+\!G(z,x,t)u) \right\} \\
&\begin{multlined}= \Delta^{-1}\left\{ (A-B_{a\tau}^q C)q-(A-B_{a\tau}^q C)x \right\} \\
\quad-\Delta^{-1}B( F(z,x) + G(z,x,t)u) )
\end{multlined}\\
& = \Delta^{-1}(A-B_{a\tau}^q C) \Delta\xi - \Delta^{-1}B\Theta_{\xi} \\
& = \frac{1}{\tau}A_{\xi}\xi - B\Theta_{\xi}.
\end{align*}
Meanwhile, the dynamics of $\eta$ can be obtained as follows. For $j=1,\dots, \nu_i-1$, the definition of $\eta_{ij}$ results in 
\begin{align*}
\tau\dot\eta_{ij} = \eta_{i,j+1}.
\end{align*}
To derive the dynamics of $\eta_{i\nu_i}$, we compute 
\begin{align*}
p_{i1}^{(\nu_i)}&= -\sum_{l=1}^{\nu_i}\frac{a_{i,\nu_i-l+1}}{\tau^l}p_{i1}^{(\nu_i-l)}+\frac{a_{i1}}{\tau^{\nu_i}}\bar G_i(\bar z, \phi(q))u\\
q_{i\nu_i}^{(\nu_i+1)}&= -\sum_{l=1}^{\nu_i}\frac{a_{i,\nu_i-l+1}}{\tau^l}q_{i\nu_i}^{(\nu_i-l+1)}\\
&\ \  + \frac{a_{i1}}{\tau^{\nu_i}}(F_i(z,x)+G_i(z,x,t)u).
\end{align*}
Using these relations, one has
\begin{align*}
\tau\dot\eta_{i\nu_i} &= \tau^{\nu_i }\left(p_{i1}^{(\nu_i)}-q_{i\nu_i}^{(\nu_i+1)}\right) \\
&=-a_{i1}\eta_{i1}-\cdots-a_{i\nu_i}\eta_{i\nu_i} \\
&\quad +\!a_{i1} \left\{ \bar G_i(\bar z, \phi(x\!+\!\Delta\xi))u \!-\! F_i (z,x) \! -\! G_i(z,x,t)u \right\}.
\end{align*}
Collecting the dynamics of $\eta_{ij}$ obtained above, we have the dynamics of  $\eta$ in \eqref{eq:xi_and_eta_dynamics}. Hence, the proof is complete. 
\end{proof}

With \eqref{eq:xi_and_eta_dynamics}, the closed-loop system is rewritten as
\begin{align}	\label{eq:ssp}
\dot z&=F_0(z, x)\nonumber\\
\dot{\bar z} &= F_0(\bar z, \phi(x+\Delta\xi)) \nonumber\\
\dot x &= Ax + B\Theta_\xi \nonumber\\
\tau\dot\xi &= A_\xi\xi-\tau B\Theta_\xi \\
\tau\dot\eta &= A\eta + B\Theta_\eta \nonumber
\end{align}
This equation \eqref{eq:ssp} is the standard singular perturbation form with the time separation parameter $\tau$ \cite{B_Khalil}.
If $\tau$ is sufficiently small, the {\it fast} variables $\xi$ and $\eta$ approach to their quasi-steady states $\xi^*$ and $\eta^*$ which are functions of {\it slow} variables such as $z$ and $x$.

\begin{lemma}	\label{lem:lemma2} 
Suppose that $\|z\|\le l_z$, $(\bar z,x) \in \Omega_{(\bar z, x)}(l)$. Then, the  quasi-steady states $(\xi^*,\eta^*)$ of the $(\xi,\eta)$-dynamics is given by
\begin{equation}	\label{eq:qss}
\begin{split}
\xi^* &= 0 \\
\eta^* & = C^\top \eta^*_{[1]} \\
\eta^*_{[1]}& = -F(\bar z, x) \\
&\quad + \bar G(\bar z, x)G^{-1}(z,x,t)  \big\{ F(\bar z, x) - F(z,x)\\
&\hspace{2cm}+ (\bar G(\bar z,x) - G(z,x,t))U_r(\bar z,x) \big\}.
\end{split}
\end{equation} \hfill $\diamond$
\end{lemma}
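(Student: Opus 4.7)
\textbf{Proof proposal for Lemma \ref{lem:lemma2}.}
The plan is to treat \eqref{eq:xi_and_eta_dynamics} as a standard singular perturbation and obtain the quasi-steady state by setting $\tau=0$ in the fast dynamics, which yields the algebraic system $A_\xi\xi^{*}=0$ and $A\eta^{*}+B\Theta_\eta|_{\xi=\xi^{*},\eta=\eta^{*}}=0$. First I would show $\xi^{*}=0$: by the construction of the parameters $a_{i1},\dots,a_{i\nu_i}$ through the Nyquist condition on $H_i(s)$, the characteristic polynomial of each block of $A_\xi$ has all roots in the open left half plane, so $A_\xi$ is Hurwitz and hence invertible, forcing $\xi^{*}=0$.

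Next I would exploit the block-Brunovsky structure of $(A,B)$. For each $i$, the equation $A_i\eta_i^{*}+B_i\Theta_{\eta,i}=0$ has the ``shift plus last-coordinate'' form; reading off its first $\nu_i-1$ rows gives $\eta_{i2}^{*}=\cdots=\eta_{i\nu_i}^{*}=0$, so $\eta^{*}=C^\top\eta_{[1]}^{*}$. The last row then reduces to the single scalar constraint $\Theta_{\eta,i}|_{(\xi^{*},\eta^{*})}=0$ for each $i$.

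To evaluate this constraint I would use two simplifications that hold on the region of interest. Since $(\bar z,x)\in\Omega_{(\bar z,x)}(l)\subset\Omega_{(\bar z,x)}(\bar l)$ and $\xi^{*}=0$ implies $\Delta\xi^{*}=0$, the definition of the saturation $\phi$ yields $\phi(x+\Delta\xi^{*})=x$, and the set $\Omega_w$ was defined precisely so that $\eta_{[1]}^{*}+F(\bar z,x)$ lies inside, making $\Phi$ act as the identity on that argument. Writing $u^{*}=\bar G^{-1}(\bar z,x)(\eta_{[1]}^{*}+F(\bar z,x))+U_r(\bar z,x)$ from \eqref{eq:control_input_with_xi_eta}, noting that $\mathsf a\eta^{*}=\mathsf a_{[1]}\eta_{[1]}^{*}$ since all higher components vanish, and canceling the common factor $a_{i1}$, the scalar equations stack into the vector identity
\begin{equation*}
\eta_{[1]}^{*}=-F(z,x)+(\bar G(\bar z,x)-G(z,x,t))u^{*}.
\end{equation*}
Substituting $u^{*}$ and collecting the $\eta_{[1]}^{*}$ terms gives $G(z,x,t)\bar G^{-1}(\bar z,x)\,\eta_{[1]}^{*}$ on the left. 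By Assumption \ref{ass:contraction}, $\|I-G\bar G^{-1}\|\le\mu<1$, so $G\bar G^{-1}$ is invertible with inverse $\bar G G^{-1}$; left-multiplying by $\bar G(\bar z,x)G^{-1}(z,x,t)$ and rearranging produces exactly \eqref{eq:qss}.

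The main obstacle I anticipate is the consistency step between $\Phi$ and $\Omega_w$: one must verify that the solved $\eta_{[1]}^{*}+F(\bar z,x)$ actually falls inside $\Omega_w$ so that the ``$\Phi=$ identity'' hypothesis used above is not circular. This is handled by observing that the defining expression for $\Omega_w$ (with $\tilde\omega=0$) is precisely the right-hand side of \eqref{eq:qss} plus $F(\bar z,x)$, evaluated over the admissible $(\bar z,z,x)$ and all $G$ compatible with Assumption \ref{ass:contraction}, so the desired inclusion holds by construction. The remaining manipulations are algebraic and benign.
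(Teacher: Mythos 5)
Your proposal is correct and follows essentially the same route as the paper's proof: set $\tau=0$, use the Hurwitz property of $A_\xi$ to get $\xi^*=0$, read off $\eta^*_{i2}=\cdots=\eta^*_{i\nu_i}=0$ from the Brunovsky structure, invoke $\phi(x)=x$ and $\Phi=\mathrm{id}$ on the region of interest, and solve the resulting algebraic equation for $\eta^*_{[1]}$. Your explicit verification that $\eta^*_{[1]}+F(\bar z,x)$ lies in $\Omega_w$ (so that the identity action of $\Phi$ is not circular) is a welcome detail that the paper leaves implicit in the phrase ``by construction of $\Phi$ and $\phi$.''
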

\begin{proof}
The quasi-steady states $(\xi^*, \eta^*)$ are determined from the right hand side of the  $(\xi,\eta)$-dynamics in \eqref{eq:ssp} with $\tau=0$, namely 
\begin{align*}
0 &= A_\xi\xi^* \big|_{\eta = \eta^*, \xi=\xi^*, \tau=0}\\
0 &= A\eta^* + B\Theta_\eta \big|_{\eta = \eta^*, \xi=\xi^*, \tau=0 .}
\end{align*}
Since $A_\xi$ is Hurwitz, we have $\xi^* = 0$. From the structure of $A$ and $B$, it follows that $\eta_{i2}^* = \dots = \eta_{i\nu_i}^* = 0$, $i = 1, \dots, m$. 
To find $\eta_{11}^*, \dots, \eta_{m1}^*$, we consider the dynamics of $\eta_{1\nu_1},\dots, \eta_{m\nu_m}$. Substituting the results on the components of $(\xi^*, \eta^*)$ obtained above yields 
\begin{equation}\label{eq:eta_1}
\begin{split}
0=&-{\mathsf a}_{[1]} \big[\eta_{[1]}^*+ F(z,x)  - (\bar G (\bar z,\phi(x)) - G(z,x,t) )\\
&\ \times \big\{\bar G^{-1}(\bar z, \phi(x)) \Phi (\eta_{[1]}^* + F(\bar z, \phi(x))) + U_r(\bar z,\phi(x))\big\} \big]. 
\end{split}
\end{equation}
Since the parameters $a_i$ are coefficients of stable polynomials, all of them are positive numbers. Hence, $\mathsf a_{[1]}$ is invertible. Moreover, by construction of $\Phi$ and $\phi$, the assumptions $\|z\|\le l_z$ and $(\bar z, x) \in \Omega_{\bar z,x}(l)$ ensure that $\phi(x) = x$ and $ \Phi (\eta_{[1]}^* + F(\bar z, \phi(x))) =\eta_{[1]}^* + F(\bar z, x)$. From these facts, we have
\begin{align*}
0 =&  F(z,x)\! -\! F(\bar z, x) + G(z,x,t) \bar G^{-1}(\bar z, x) (\eta_{[1]}^* + F(\bar z,x))  \\& - (\bar G (\bar z,x) - G(z,x,t) ) U_r(\bar z,x)\big\} 
\end{align*}
and one can obtain $\eta_{[1]}^*$ as in \eqref{eq:qss}. Thus, the proof is complete.
\end{proof}

With $\tilde\xi = \xi - \xi^*$ and $\tilde\eta = \eta - \eta^*$, the dynamics \eqref{eq:ssp} becomes
\begin{align*}
\dot z&= F_0(z, x)\\
\dot{\bar z} &= F_0(\bar z, \phi(x+\Delta\tilde\xi)) \\
\dot x &= Ax + B\Theta_{\tilde\xi} \\
\tau\dot{\tilde\xi} &= A_\xi\tilde\xi-\tau B\Theta_{\tilde\xi} \\
\tau\dot{\tilde\eta} &= A\tilde\eta + B\Theta_{\tilde\eta} - \tau\dot{\eta^*} 
\end{align*}
where  $\Theta_{\tilde\xi} (z,x,\bar z, \tilde \xi, \tilde \eta, t) = \Theta_{\xi}( z,x,\bar z, \tilde \xi, \tilde \eta + \eta^*, t)$ and $\Theta_{\tilde\eta}(z,x,\bar z, \tilde \xi,\tilde \eta, t) =\Theta_{ \eta}( z,x,\bar z, \tilde \xi, \tilde \eta + \eta^*, t)$. 

\begin{lemma}\label{lem:Theta_tilde_eta}
The function $\Theta_{\tilde \eta}$ can be decomposed as 
\begin{equation}\label{eq:Theta_tilde_eta}
\Theta_{\tilde \eta}(z,x,\bar z, \tilde \xi, \tilde \eta, t) =-\bar {\mathsf a}\tilde \eta  - \mathsf a_{[1]}\Psi_t(\tilde \eta_{[1]},t) + \tilde \Theta_{\tilde \eta}(\tilde\xi, t)
\end{equation}
where $\bar {\mathsf a}=\mathsf a - \mathsf a_{[1]}C$, 
\begin{equation}\label{eq:Psi_t}
\begin{split}
\Psi_t(\tilde\eta_{[1]},t) &= \tilde\eta_{[1]}-[I-G(z,x,t)\bar G^{-1}(\bar z, \phi(x ))] \\
&\quad\quad \quad \times \{\Phi(\tilde\eta_{[1]}+\eta^*_{[1]}+F(\bar z,  \phi(x ))) \\
&\quad \quad\quad\quad  \quad\quad \quad\quad-\Phi(\eta^*_{[1]}+F(\bar z,  \phi(x))) \}
\end{split}
\end{equation}
and the function $\tilde \Theta_{\tilde \eta}(\tilde \xi, t)$ is continuously differentiable with respect to its arguments and $\tilde \Theta_{\tilde \eta}(0, t)=0$. In addition, suppose $(\bar z,x)\le \Omega_{(\bar z,x)}(l)$. Then, under Assumption \ref{ass:contraction}, the function $\Psi_t$ belongs to the sector $[1-\mu, 1+\mu \big]$, i.e., $(\Psi_t(\zeta,t)\!-\!(1-\mu)\zeta)^\top (\Psi_t(\zeta,t)\!-\!(1+\mu)\zeta)\le 0$, $\forall \zeta\in \mathbb R^m$. \hfill $\diamond$
\end{lemma}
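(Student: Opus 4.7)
The plan is to build the decomposition algebraically, starting from the expression for $\Theta_\eta$ in Lemma~\ref{lem:lemma1} and invoking the identity that defines $\eta^*_{[1]}$ in the proof of Lemma~\ref{lem:lemma2}; the sector condition on $\Psi_t$ is then verified by combining Assumption~\ref{ass:contraction} with the Lipschitz bound $\|\partial\Phi/\partial w\|\le 1$ that comes from the construction of $\Phi$.

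First I would substitute $u$ from \eqref{eq:control_input_with_xi_eta}, together with $\eta=\tilde\eta+\eta^*$ and $\xi=\tilde\xi$ (since $\xi^*=0$), into the $\Theta_\eta$ formula of Lemma~\ref{lem:lemma1}. Writing $\phi_\Delta:=\phi(x+\Delta\tilde\xi)$ and $\phi_0:=\phi(x)$, this gives an explicit formula for $\Theta_{\tilde\eta}$ in which $\bar G$, $\Phi$, $F(\bar z,\cdot)$, and $U_r$ are all evaluated at $\phi_\Delta$. Next, use the algebraic identities $\mathsf a=\bar{\mathsf a}+\mathsf a_{[1]}C$ and $\mathsf a C^\top=\mathsf a_{[1]}$, together with $\eta^*=C^\top\eta^*_{[1]}$, to pull out the linear term $-\bar{\mathsf a}\tilde\eta$ and convert $\mathsf a\eta^*$ into $\mathsf a_{[1]}\eta^*_{[1]}$. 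The remaining part of $\Theta_{\tilde\eta}$ takes the form $-\mathsf a_{[1]}$ times a bracket that collects $\tilde\eta_{[1]}$, $\eta^*_{[1]}+F(z,x)$, an $(I-G\bar G^{-1}(\bar z,\phi_\Delta))\Phi$-term, and a $(\bar G(\bar z,\phi_\Delta)-G)U_r$-term.

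The second step is to recognise $\Psi_t(\tilde\eta_{[1]},t)$ inside that bracket. From the calculation in the proof of Lemma~\ref{lem:lemma2}, $\eta^*_{[1]}$ satisfies the identity
\[
\eta^*_{[1]}+F(z,x)=(I-G\bar G^{-1}(\bar z,\phi_0))\Phi(\eta^*_{[1]}+F(\bar z,\phi_0))+(\bar G(\bar z,\phi_0)-G)U_r(\bar z,\phi_0).
\]
Substituting this into the bracket and then adding and subtracting $(I-G\bar G^{-1}(\bar z,\phi_0))\Phi(\tilde\eta_{[1]}+\eta^*_{[1]}+F(\bar z,\phi_0))$, the terms evaluated only at $\phi_0$ collapse exactly to $\Psi_t(\tilde\eta_{[1]},t)$ as defined in \eqref{eq:Psi_t}. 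The surviving contributions are all differences of the form $\Lambda(\cdots,\phi_\Delta)-\Lambda(\cdots,\phi_0)$ for $C^1$ expressions $\Lambda$ built from $\bar G$, $\Phi$, $F$, and $U_r$. Because $\phi$ is $C^1$ and $\phi_\Delta=\phi_0$ whenever $\tilde\xi=0$, the sum of these differences multiplied by $-\mathsf a_{[1]}$ defines the claimed $\tilde\Theta_{\tilde\eta}(\tilde\xi,t)$; continuous differentiability follows by the chain rule and $\tilde\Theta_{\tilde\eta}(0,t)=0$ is immediate.

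For the sector condition, one computes
\[
\Psi_t(\zeta,t)-\zeta=-[I-G(z,x,t)\bar G^{-1}(\bar z,\phi_0)]\,[\Phi(\zeta+\eta^*_{[1]}+F(\bar z,\phi_0))-\Phi(\eta^*_{[1]}+F(\bar z,\phi_0))].
\]
Assumption~\ref{ass:contraction} bounds the operator norm of the first bracket by $\mu$, while $\|\partial\Phi/\partial w\|\le 1$ together with the mean-value theorem bounds the Euclidean norm of the second bracket by $\|\zeta\|$. Hence $\|\Psi_t(\zeta,t)-\zeta\|\le\mu\|\zeta\|$, which upon expansion is exactly $(\Psi_t-(1-\mu)\zeta)^\top(\Psi_t-(1+\mu)\zeta)\le0$. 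The main obstacle I expect is the bookkeeping in the second paragraph: the identity for $\eta^*_{[1]}$ naturally evaluates at $\phi_0$ whereas the substituted control input uses $\phi_\Delta$, so one has to add and subtract the $\phi_0$-counterparts in precisely the right pattern for $\Psi_t$ to emerge cleanly and for every residual to carry a $\phi_\Delta-\phi_0$ factor, which is what makes $\tilde\Theta_{\tilde\eta}$ vanish at $\tilde\xi=0$.
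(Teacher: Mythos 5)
Your proposal is correct and takes essentially the same route as the paper: the paper organizes the computation as the telescoping identity $\Theta_{\tilde \eta}(\cdot,\tilde\xi,\tilde\eta,\cdot)=[\Theta_{\tilde \eta}(\cdot,\tilde\xi,\tilde\eta,\cdot)-\Theta_{\tilde \eta}(\cdot,0,\tilde\eta,\cdot)]+[\Theta_{\tilde \eta}(\cdot,0,\tilde\eta,\cdot)-\Theta_{\tilde \eta}(\cdot,0,0,\cdot)]$ and asserts that the second bracket equals $-\bar{\mathsf a}\tilde\eta-\mathsf a_{[1]}\Psi_t(\tilde\eta_{[1]},t)$, which is precisely the explicit substitution you perform using the defining identity of $\eta^*_{[1]}$ and the relations $\mathsf a=\bar{\mathsf a}+\mathsf a_{[1]}C$, $\mathsf a C^\top=\mathsf a_{[1]}$. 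Your sector argument, bounding $\|\Psi_t(\zeta,t)-\zeta\|\le\mu\|\zeta\|$ via Assumption~\ref{ass:contraction} and the unit bound on $\partial\Phi/\partial w$ and then expanding the quadratic form, is identical to the paper's.
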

\begin{proof}
Noting that $\Theta_{\tilde \eta}(z,x,\bar z, 0,0,t)=0$, one has
\begin{align*}
\Theta_{\tilde \eta}(z,x,\bar z, \tilde \xi,\tilde \eta,t) &= \Theta_{\tilde \eta}(z,x,\bar z, \tilde \xi,\tilde \eta,t)- \Theta_{\tilde \eta}(z,x,\bar z, 0,\tilde \eta,t)\\
& +\Theta_{\tilde \eta}(z,x,\bar z, 0,\tilde \eta,t)- \Theta_{\tilde \eta}(z,x,\bar z, 0,0,t).
\end{align*}
One can easily compute that the second line of this equation becomes $-\bar {\mathsf a} \tilde \eta - a_{[1]}\Psi_t(\tilde\eta_{[1]},t)$ while the first line becomes  zero when $\tilde \xi=0$ and differentiable with respect to $\tilde \xi$ and $t$.

To prove the property of $\Psi_t$, we compute
\begin{equation*}
\begin{multlined}
\| \Psi_t(\tilde\eta_{[1]},t) - \tilde\eta_{[1]} \| \le \| [I-G(z,x,t)\bar G^{-1}(\bar z, \phi(x ))]\| \\
\times \| \Phi(\tilde\eta_{[1]}+\eta^*_{[1]}+F(\bar z,  \phi(x )))-\Phi(\eta^*_{[1]}+F(\bar z,  \phi(x))) \| 
\end{multlined}
\end{equation*}
From Assumption \ref{ass:contraction} and the facts that $\phi(x) = x$ and $\| (\partial \Phi)/(\partial \tilde\eta_{[1]}) \le 1\|$, it follows that
\begin{align*}
\| \Psi_t(\tilde\eta_{[1]},t) - \tilde\eta_{[1]} \| &\le \mu \Big\| \int_{0}^{1} \frac{\partial \Phi}{\partial \tilde\eta_{[1]}}(\sigma_v \tilde\eta_{[1]})d\sigma_v \tilde\eta_{[1]} \Big\| \\
&\le \mu \| \tilde\eta_{[1]} \|.
\end{align*}
Finally, we can obtain
\begin{align*}
&(\Psi_t(\tilde\eta_{[1]},t) \!-\! \tilde\eta_{[1]} \!+\! \mu\tilde\eta_{[1]})^\top (\Psi_t(\tilde\eta_{[1]},t) \!-\! \tilde\eta_{[1]} \!-\! \mu\tilde\eta_{[1]})\\
&=(\Psi_t(\tilde\eta_{[1]},t)-\tilde\eta_{[1]})^\top(\Psi_t(\tilde\eta_{[1]},t)-\tilde\eta_{[1]}) - \mu^2 \tilde\eta_{[1]}^\top \tilde\eta_{[1]} \le 0
\end{align*}
which implies that $\Psi_t$ belongs to the sector $(1-\mu, 1+\mu)$.
\end{proof}
 
To proceed, we define $\chi_1 = [\bar z;x]$ and $\chi_2 = [\tilde\xi;\tilde\eta]$ for the sake of simplicity.  
From the facts that the dynamics of $z, x, \bar z$ is uniformly bounded with respect to $\tau$, $\tilde \xi$, and $\tilde \eta$, and that  the boundaries of $\Omega_{\chi_1}(l)$ and $S_{\bar z}\times S_x$ are disjoint, there exists $T_0>0$, independent of $\tau$, such that $\chi_1\in \Omega_{\chi_1}(l)$ and $\|z(t)\|\le l_z$, $\forall 0\le t\le T_0$. 

Consider the Lyapunov function of $\chi_2$ defined by
\begin{align*}
V_{\chi_2} = \chi_2^{\top} P_2 \chi_2
\end{align*}
where $P_2 ={\rm diag}\{P_{\tilde\xi}, \gamma P_{\tilde\eta}\}$ with $\gamma$ being a positive constant to be chosen. The matrices $P_{\tilde\xi}$ and $P_{\tilde\eta}$ are symmetric positive definite matrices that are explained below. The matrix $P_{\tilde\xi}$ is determined from $P_{\tilde\xi} A_\xi + A_\xi^{\top} P_{\tilde\xi} = -I$, while $P_{\tilde\eta}$ is associated to the stability of $\tilde \eta$-dynamics expressed in a fast time scale $\sigma := t/\tau$, namely
\begin{equation}	\label{eq:tilde_eta_prime}
\begin{split}
\tilde\eta' :=\frac{d\tilde\eta}{d\sigma}=&( A- B \bar{\mathsf a} )\tilde\eta +B\mathsf a_{[1]}\{-\Psi_t(\tilde\eta_{[1]},t)\}\\& +B\tilde \Theta_{\tilde \eta}(\tilde \xi, t)-\tau\dot{\eta^*}.
\end{split}
\end{equation}
We assume for now that $\chi_1\in \Omega_{\chi_1}(l)$ and investigate the stability of  the system $\tilde\eta' = ( A- B \bar{\mathsf a} )\tilde\eta  +B\mathsf a_{[1]}\{-\Psi_t(\tilde\eta_{[1]},t)\}$ (the dynamics \eqref{eq:tilde_eta_prime} without $\tau\dot{\eta^*}$)  by interpreting it as a feedback system $\tilde\eta' = ( A- B \bar{\mathsf a} )\tilde\eta + B\mathsf a_{[1]}u^\dagger$ with the output $y^\dagger = \tilde\eta_{[1]}$ and the feedback $u^\dagger = -\Psi_t(y^\dagger,t)$. 
Note that the transfer function matrix from $u^\dagger$ to $y^\dagger$ is $H(s)={\rm diag}\{H_1(s),\dots, H_m(s)\}$ where $H_i(s)$ is the transfer function defined in \eqref{eq:tf}. By the way it is constructed, 
$H(s)[I_m + (1-\mu)H(s)]^{-1}$ is diagonal and Hurwitz, and $[I_m + (1+\mu)H(s)][I_m + (1-\mu)H(s)]^{-1}$ is diagonal and strictly positive real.
Since $\chi_1\in \Omega_{\chi_1}(l)$, $\Psi_t(y,t)$ belongs to the sector $[1-\mu, 1+\mu]$ by Lemma \ref{lem:Theta_tilde_eta} and the circle criterion ensures that the feedback system is exponentially stable and admits a quadratic Lyapunov function $V_{\tilde\eta} = \tilde\eta^{\top} P_{\tilde\eta} \tilde\eta$, with $P_{\tilde\eta} = P_{\tilde\eta}^\top >0$, such that $V_{\tilde \eta}'$ along the trajectory of $\tilde \eta$-dynamics (without $\tau\dot \eta^*$) satisfies $V_{\tilde\eta}' \le -\kappa_{\tilde\eta}\|\tilde\eta\|^2$ with $\kappa_{\tilde\eta}>0$.

\begin{lemma}	\label{lem:stability_chi_2}
Suppose that  $\chi_1(t) \in \Omega_{\chi_1}(l)$ and  $\|z(t)\|\le l_z$, $\forall t_0 \le t \le T$ for some $t_0$ and $T$. 
Then, there exist positive constants $\rho, \lambda$, and $\tau^*_1 \le 1$ such that for any $0 < \tau < \tau_1^*$ and $t_0 \le t \le T$, it holds that
\begin{equation*}
\dot V_{\chi_2} \le -\frac{\lambda}{2\tau}\|\chi_2\|^2, \;{\rm when}\; V_{\chi_2} \ge \rho\tau^2.
\end{equation*}
$\hfill\diamond$
\end{lemma}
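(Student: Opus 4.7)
The plan is a standard singular perturbation Lyapunov argument: differentiate $V_{\chi_2}$ along the $(\tilde\xi,\tilde\eta)$-dynamics, exploit the Lyapunov equations separately on each subsystem to extract two $1/\tau$-weighted negative-definite leading terms, and absorb the remaining cross-couplings and the perturbation $\tau\dot\eta^*$ by means of Young's inequality and a small mixing weight $\gamma$.

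First, I would differentiate: $\dot V_{\chi_2}=2\tilde\xi^\top P_{\tilde\xi}\dot{\tilde\xi}+2\gamma\tilde\eta^\top P_{\tilde\eta}\dot{\tilde\eta}$. On the $\tilde\xi$ block, substitute $\tau\dot{\tilde\xi}=A_\xi\tilde\xi-\tau B\Theta_{\tilde\xi}$ and invoke $P_{\tilde\xi}A_\xi+A_\xi^\top P_{\tilde\xi}=-I$ to obtain $-\tfrac{1}{\tau}\|\tilde\xi\|^2-2\tilde\xi^\top P_{\tilde\xi}B\Theta_{\tilde\xi}$. On the $\tilde\eta$ block, split the dynamics using Lemma \ref{lem:Theta_tilde_eta} into the nominal feedback part $(A-B\bar{\mathsf a})\tilde\eta-B\mathsf a_{[1]}\Psi_t(\tilde\eta_{[1]},t)$ plus the perturbations $B\tilde\Theta_{\tilde\eta}(\tilde\xi,t)-\tau\dot\eta^*$. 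For the nominal part, the circle-criterion construction of $P_{\tilde\eta}$ yields $V_{\tilde\eta}'\le-\kappa_{\tilde\eta}\|\tilde\eta\|^2$ in the fast time scale, which, back in the original time, produces $-\tfrac{\gamma\kappa_{\tilde\eta}}{\tau}\|\tilde\eta\|^2$ in $\dot V_{\chi_2}$, while the remaining two terms contribute $\tfrac{2\gamma}{\tau}\tilde\eta^\top P_{\tilde\eta}B\tilde\Theta_{\tilde\eta}-2\gamma\tilde\eta^\top P_{\tilde\eta}\dot\eta^*$.

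Next I would bound the stray terms using the hypothesis $\chi_1(t)\in\Omega_{\chi_1}(l)$ and $\|z(t)\|\le l_z$. On this compact set, the saturations $\phi,\Phi$, the functions $F,\bar G,\bar G^{-1},U_r$, and $G$ are all bounded, so $\|\Theta_{\tilde\xi}\|$ is uniformly bounded by some $M_\xi$; moreover $\tilde\Theta_{\tilde\eta}$ is continuously differentiable with $\tilde\Theta_{\tilde\eta}(0,t)=0$, which yields a Lipschitz bound $\|\tilde\Theta_{\tilde\eta}(\tilde\xi,t)\|\le L_\eta\|\tilde\xi\|$; finally $\dot\eta^*$ is bounded by some $M_*$ because $\eta^*$ is a smooth function of $(\bar z,z,x)$ and $\dot z,\dot{\bar z},\dot x$ are all bounded on the relevant compact sets. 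Applying Young's inequality gives
\begin{align*}
-2\tilde\xi^\top P_{\tilde\xi}B\Theta_{\tilde\xi}&\le\tfrac{1}{4\tau}\|\tilde\xi\|^2+c_1\tau,\\
\tfrac{2\gamma}{\tau}\tilde\eta^\top P_{\tilde\eta}B\tilde\Theta_{\tilde\eta}&\le\tfrac{\gamma\kappa_{\tilde\eta}}{2\tau}\|\tilde\eta\|^2+\tfrac{c_2\gamma}{\tau}\|\tilde\xi\|^2,\\
-2\gamma\tilde\eta^\top P_{\tilde\eta}\dot\eta^*&\le\tfrac{\gamma\kappa_{\tilde\eta}}{4\tau}\|\tilde\eta\|^2+c_3\gamma\tau,
\end{align*}
for constants $c_1,c_2,c_3>0$ depending only on the compact data.

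Choosing $\gamma$ sufficiently small so that $c_2\gamma\le 1/4$ makes the $\|\tilde\xi\|^2/\tau$ coefficient at most $-1/2$, and the $\|\tilde\eta\|^2/\tau$ coefficient reduces to $-\gamma\kappa_{\tilde\eta}/4$. Setting $\lambda_0:=\min\{1/2,\gamma\kappa_{\tilde\eta}/4\}$ and collecting constants into $c:=c_1+c_3\gamma$, I obtain $\dot V_{\chi_2}\le-\tfrac{\lambda_0}{\tau}\|\chi_2\|^2+c\tau$ for every $0<\tau\le 1$ and $t_0\le t\le T$. Finally, choosing $\rho:=2c\,\lambda_{\max}(P_2)/\lambda_0$ and using $\|\chi_2\|^2\ge V_{\chi_2}/\lambda_{\max}(P_2)$, the hypothesis $V_{\chi_2}\ge\rho\tau^2$ converts half of the negative leading term into a constant at least $c\tau$, which absorbs the perturbation and yields $\dot V_{\chi_2}\le-\tfrac{\lambda}{2\tau}\|\chi_2\|^2$ with $\lambda:=\lambda_0$, valid whenever $0<\tau<\tau_1^*:=1$.

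The main obstacle is controlling the term $\tfrac{2\gamma}{\tau}\tilde\eta^\top P_{\tilde\eta}B\tilde\Theta_{\tilde\eta}$, since it carries the dangerous $1/\tau$ factor from the fast-to-slow time rescaling; the fact that $\tilde\Theta_{\tilde\eta}$ vanishes at $\tilde\xi=0$ (so it is proportional to $\|\tilde\xi\|$ on the compact set) is precisely what allows it to be absorbed into the $-\tfrac{1}{\tau}\|\tilde\xi\|^2$ term after choosing $\gamma$ small, which is why the weighted diagonal structure of $P_2$ is essential.
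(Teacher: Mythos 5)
Your proof is correct and follows essentially the same route as the paper: the same weighted Lyapunov function $V_{\chi_2}=\tilde\xi^\top P_{\tilde\xi}\tilde\xi+\gamma\tilde\eta^\top P_{\tilde\eta}\tilde\eta$, the Lyapunov equation for $A_\xi$ and the circle-criterion matrix $P_{\tilde\eta}$ for the two $1/\tau$-weighted negative terms, smallness of $\gamma$ to dominate the cross term $\tfrac{\gamma}{\tau}\|\tilde\xi\|\|\tilde\eta\|$ coming from $\tilde\Theta_{\tilde\eta}(0,t)=0$, and the level-set condition $V_{\chi_2}\ge\rho\tau^2$ to absorb the residue. The only (immaterial) difference is bookkeeping: you convert the bounded perturbations $\Theta_{\tilde\xi}$ and $\dot\eta^*$ into an additive $c\tau$ via Young's inequality, whereas the paper keeps them as terms linear in $\|\chi_2\|$ and cancels them directly against $\tfrac{\kappa_1}{2\tau}\|\chi_2\|$ on the set $V_{\chi_2}\ge\rho\tau^2$.
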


\begin{proof}
The time derivative of $V_{\chi_2}$ is computed as 
\begin{align*}
\dot V_{\chi_2} &= \frac{1}{\tau}\tilde\xi^{\top}(P_{\tilde\xi} A_\xi + A_\xi^\top P_{\tilde\xi}){\tilde\xi} - 2 \tilde\xi^\top P_{\tilde\xi} B\Theta_{\tilde\xi} \\
&\quad -\frac{1}{\tau}\gamma\kappa_{\tilde\eta}\|\tilde\eta\|^2 +\frac{2\gamma}{\tau}\tilde\eta^{\top}P_{\tilde\eta}\tilde\Theta_{\tilde\eta}(\tilde \xi,t) - 2\gamma \tilde\eta^\top P_{\tilde\eta}\dot{\eta^*}.
\end{align*}
Since $\Theta_{\tilde \xi}(z,x,\bar z, \tilde \xi, \tilde \eta, t)$, $\tilde \Theta_{\tilde \eta}(\tilde \xi, t)$, $\partial \tilde\Theta_{\tilde \eta}(\tilde \xi,t)/\partial \tilde \xi$, and $\dot{\eta^*}$ are uniformly bounded when $\|z\| \le l_z$, $(\bar z, x)\in \Omega_l $, and $\tau \le 1$, it holds that 
\begin{equation}	\label{eq:dot_V_chi2_1}
\begin{split}
\dot V_{\chi_2} \le& -\frac{1}{\tau}\|\tilde\xi\|^2 -\frac{1}{\tau}\gamma \kappa_{\tilde\eta}\|\tilde\eta\|^2\\&+ \frac{2\gamma}{\tau}k_1\|\tilde \xi\|\|\tilde\eta\| + k_2 \|\tilde\xi\| + \gamma k_3 \|\tilde\eta\|
\end{split}
\end{equation}
where $k_1$, $k_2$, and $k_3$ are positive constants such that $\| \tilde\Theta_{\tilde \eta}(\tilde \xi,t)\| \le  k_1\|\tilde \xi\|$, $\|2 P_{\tilde\xi} B \Theta_{\tilde\xi}\| \le k_2$, and $\|2P_{\tilde\eta}(\dot{\eta ^*})\| \le k_3$, respectively. 

Take $\gamma <\kappa_{\tilde\eta}/(2k_1^2)$. Then, one has
\begin{align*}
\dot V_{\chi_2} \le -\frac{1}{2\tau}\|\tilde\xi\|^2 -\frac{1}{2\tau}\bar \kappa_1\|\tilde\eta\|^2+ k_2 \|\tilde\xi\| + \gamma k_3 \|\tilde\eta\|
\end{align*}
where $\bar \kappa_1=2( \gamma \kappa_{\tilde\eta} - 2\gamma^2 k_1^2)>0$. 
The time derivative \eqref{eq:dot_V_chi2_1} is rewritten as
\begin{equation}	\label{eq:dot_V_chi2}
\dot V_{\chi_2} \le -\frac{\kappa_1}{2\tau}\|\chi_2\|^2 - \|\chi_2\| \left( \frac{\kappa_1}{2\tau}\|\chi_2\| - \kappa_2 \right)
\end{equation}
where $\kappa_1 = \min\{1/2, \bar \kappa_1/2\}$ and $\kappa_2 = \sqrt{k_2^2 + \gamma^2k_3^2}$.
The proof is complete with  $\lambda = \kappa_1$ and $\rho = 4\lambda_{\max}(P_2)\frac{\kappa_2^2}{\kappa_1^2}$. 
\end{proof}

Let $\Omega_{\chi_2} = \left\{ \chi_2 \in \R^{2\nu} \big| V_{\chi_2}\le\rho\tau^2 \right\}$ which depends on $\tau$. Recall the existence of time interval $0\le t\le T_0$ during which $\chi_1(t)\in \Omega_{\chi_1}(l)$ and $\| z(t)\|\le l_z$. By Lemma \ref{lem:stability_chi_2}, with $t_0=0$ and $T=T_0$, there exists $\rho > 0$, $\tau_1^*>0$ with $\tau_1^*\le \min \{\sqrt{\lambda_{\min}(P_2)/\rho} \delta_{\tilde w}, 1\}$, and $P_2 = P_2^{\top} > 0$ such that for any $0 < \tau <\tau_1^*$, it holds that $\dot V_{\chi_2}(t) \le -\frac{\lambda_1}{\tau}V_{\chi_2}(t)$ with $\lambda_1 = \frac{\lambda}{2\lambda_{\max}(P_2)}$, provided that $V_{\chi_2}(t)\ge \rho \tau^2$ and $0\le t \le T_0$. Meanwhile, considering the compactness of the sets of initial conditions and the definition of $\tilde\xi$ and $\tilde\eta$, one can prove that $V_{\chi_2}(0) \le \frac{k}{\tau^{2r}}$ for some positive constant $k>0$ and $r = \max\{\nu_i\}$, $i=1,\dots,m$.  Applying this bound for $V_{\chi_2}(0)$, one has
$V_{\chi_2}(t) \le e^{-\lambda_1\frac{t}{\tau}}V_{\chi_2}(0)\le e^{-\lambda_1\frac{t}{\tau}}\frac{k}{\tau^{2r}}$. From the fact that $\lim_{\tau\rightarrow 0}e^{- \lambda_1 \frac{T_0}{\tau}}\frac{k}{\rho\tau^{2r+2}} =0$, one can find $\tau_2^*>0$ such that for any $\tau<\tau_2^*$, it holds that $V_{\chi_2}(T_0)\le \rho \tau^2$. Thus, for any initial condition $(p(0), q(0))\in S_p \times S_q$,  it holds that $\chi_1(t)\in \Omega_{l}$, $\forall t\le T_0$, and $\chi_2(t)$ enters the set $\Omega_{\chi_2}$ in finite time $T_0$.

We now prove that there exists $\tau_3^*>0$ such that for any $\tau<\tau_3^*$, the set $\Omega_{\chi_1}(l)\times \Omega_{\chi_2}$ is  positively invariant.
This will be carried out by showing that
\begin{equation}	\label{eq:lyapunov_property}
\begin{split}
\dot {\bar V} &\le 0, \quad \forall (\chi_1, \chi_2) \in \left\{ \chi_1 \big| \bar V(\chi_1) = l\right\} \times \Omega_{\chi_2} \\
\dot V_{\chi_2} &\le 0, \quad \forall (\chi_1, \chi_2) \in \Omega_{\chi_1}(l)\times \left\{ \chi_2 \big| V_{\chi_2} = \rho\tau^2\right\}.
\end{split}
\end{equation}
To show the first property of \eqref{eq:lyapunov_property}, we start by noting that when $\tau\le \min\{\tau_1^*,\tau_2^*\}$, $V_{\chi_1}(\bar z, x) \le l$, and $\chi_2 \in \Omega_{\chi_2}$, the saturation functions $\Phi(w)$ and $\phi(x)$ lose the effect, i.e., $\Phi(w) = w$ and $\phi(x) = x$. 
Then, the time derivative of $\bar V(\chi_1)$ becomes
\begin{align*}
\dot {\bar V}  &=  \frac{\partial \bar V}{\partial x} \left[ A x + B\left( F(z,x) + G(z,x,t) u \right) \right] \\&\quad + \frac{\partial \bar V}{\partial \bar z} F_0 (\bar z,x).
\end{align*}
By Assumption \ref{ass:nominal_control_input}, we have
\begin{multline*}
\dot {\bar V} \le -\alpha_3 \left( \| \chi_1 \| \right) + \frac{\partial \bar V}{\partial x} B[ F(z,x) - F(\bar z, x) \\ + G(z,x,t)u - \bar G(\bar z,x)U_r(\bar z, x)].
\end{multline*}
With \eqref{eq:control_input_with_xi_eta} and $\eta^*_{[1]}$ of \eqref{eq:qss}, one  has
\begin{align*}
\dot {\bar V} &\le -\alpha_3 \left( \| \chi_1 \| \right) + \frac{\partial \bar V}{\partial x}B \big[ F(z,x) - F(\bar z, x) \\
&\quad 
+ G(z,x,t) \bar G^{-1}(\bar z, x+\Delta\tilde\xi) (\tilde\eta_{[1]}\!+\!\eta_{[1]}^*\!+\!F(\bar z, x\!+\!\Delta\tilde\xi)) \\
&\quad + G(z,x,t) U_r(\bar z, x\!+\!\Delta\tilde\xi) - \bar G(\bar z,x) U_r(\bar z, x) \big]
\\
&\le  -\alpha_3 \left( \| \chi_1 \| \right) +  \frac{\partial\bar V
}{\partial x}BG(z,x,t) \bar G^{-1}(\bar z, x+\Delta \tilde\xi)\tilde\eta_{[1]} \\&\quad + \frac{\partial \bar V}{\partial x}B \Pi(\tilde\xi, t)
\end{align*}
where
\begin{align*}
&\Pi(  \tilde\xi, t)=\\& F(z,x) - F(\bar z,x) \\ 
&+ G(z,x,t)U_r(\bar z, x+ \Delta\tilde\xi) -\bar G(\bar z, x)U_r(\bar z, x) \\
& +G(z,x,t) \bar G^{-1}(\bar z,x+\Delta \tilde\xi)(F(\bar z, x+\Delta\tilde\xi) - F(\bar z, x)) \\
&+ G(z,x,t) \bar G^{-1}(\bar z,x+\Delta \tilde\xi)\bar G(\bar z,x)G^{-1}(z,x,t)\\
&\times \{ -F(z,x) + F(\bar z, x) -(G(z,x,t) - \bar G(\bar z,x))U_r(\bar z,x) \}.
\end{align*}
Note that $\Pi(\tilde\xi,t) = 0$ when $\tilde\xi = 0$.

Since $\chi_1$ and $\chi_2$ belong to compact sets, we have 
\begin{equation}	\label{eq:dot_V_bar}
\begin{split}
\dot {\bar V} &\le -\alpha_3 \left( \| \chi_1 \| \right) + \kappa_3 \left( k_{\tilde\eta} \|\tilde\eta\|+ k_{\tilde\xi} \| \tilde\xi\| \right) \\
& \le -\alpha_3(\alpha_2^{-1}(\bar V)) + \kappa   \| \chi_2 \| \\
\end{split}
\end{equation}
where $\kappa_3=\max_{\chi_1\in \Omega_{\chi_1}(l)} \left\| \frac{\partial \bar V}{\partial x}B\right\|$, $k_{\tilde\eta}$ is the maximum of $\|G(z,x,t)\bar G^{-1}(\bar z, x+\Delta\tilde\xi)\|$ on the region under consideration, $k_{\tilde\xi}$ is a constant such that $\Pi(\tilde \xi,t) \| \le k_{\tilde\xi}\|\tilde\xi\|$, and $\kappa = \kappa_3 {\rm max}\{k_{\tilde\eta}, k_{\tilde\xi} \}$. 
Let  $\bar \tau_3^* = \frac{1}{ \kappa } \sqrt{\frac{\lambda_{\min}(P_2)}{\rho}}\alpha_3(\alpha_2^{-1}(l))$ and take $\tau\le \min\{\tau_1^*, \tau_2^*,\bar \tau_3^*\}$. Since $\chi_2 \in \Omega_{\chi_2}$, it follows that  $\|\chi_2\|\le \sqrt{\frac{\rho}{\lambda_{\min}(P_2)}}\bar \tau_3^*\le\alpha_3(\alpha_2^{-1}(l))/\kappa$. 
Applying this result and the condition $\bar V(\chi_1)=l$ to \eqref{eq:dot_V_bar}, we have $\dot {\bar V}(\chi_1)\le 0$, $\forall (\chi_1,\chi_2)\in \left\{ \chi_1 \big| \bar V(\chi_1)=l \right\}\times\Omega_{\chi_2}$.

The second equation of \eqref{eq:lyapunov_property} follows from Lemma \ref{lem:stability_chi_2} with possibly new bound on $\tau$, say $\tilde \tau_3>0$. Taking $\tau_3^* =\min\{\tau_1^*, \tau_2^*, \bar \tau_3^*, \tilde \tau_3^*\}=:\tau^*$ proves the invariance. 

The proceeding arguments prove that the solution of the closed-loop system is uniformly bounded.

Now, we prove the practical stability \eqref{eq:PracticalStability}. 
Let $\epsilon>0$ be given. Since the time derivative of $\bar V(\chi_1)$ satisfies \eqref{eq:dot_V_bar}, we can achieve the objective by taking sufficiently small $\tau$ such that $\|\chi_2(t)\|$ remains as small as desired. In fact, let $\tau_4^*= \frac{1}{2\kappa}\sqrt{\lambda_{\min}(P_2)/{\rho}} \alpha_3(\alpha_2^{-1}(\alpha_1(\epsilon)))$ and take $\tau\le \min\{\tau_3^*, \tau_4^*\}$. Then by invariance, the assumption of Lemma \ref{lem:stability_chi_2} holds true with $t_0=T_0$ and $T=\infty$, and there exists $T_1>T_0$  such that $\chi_2(t)\in\Omega_{\chi_2}|_{\tau=\tau_4^*}$, $\forall t\ge T_1$ which means that $\|\chi_2(t)\| \le \alpha_3(\alpha_2^{-1}(\alpha_1(\epsilon)))/(2\kappa)$, $\forall t\ge T_1$. Applying this bound to \eqref{eq:dot_V_bar} yields
\begin{equation*}
\dot {\bar V} \le   -\alpha_3(\alpha_2^{-1}(\bar V)) + \frac{1}{2}\alpha_3  (\alpha_2^{-1}(\alpha_1(\epsilon)))
\end{equation*}
and this implies that $\dot {\bar V} \le -\frac{1}{2}\alpha_3(\alpha_2^{-1}(\bar V))$ whenever $\bar V(\chi_1)\ge \alpha_1(\epsilon)$.  Therefore, there exists $T_\epsilon>0$ such that $\chi_1(t)\le \epsilon$, $\forall t\ge T_\epsilon$. The proof is complete.

\section{Example}	\label{sec:example}
In this section, we validate the performance of the proposed strategy through an numerical example.
Consider a point mass satellite that is allowed to have two dimensional (planar) motion.
In polar coordinate, the dynamics (see Fig. \ref{fig:example}) is given as
\begin{equation}	\label{eq:example}
\begin{split}
\dot r &= v \\
\dot v &= r\omega^2 - \frac{k}{r^2}+\frac{1}{m}u_r \\
\dot\psi &= \omega \\
\dot\omega &= -\frac{2v\omega}{r} + \frac{1}{m r} u_{\psi}
\end{split}
\end{equation}
where $m$ is the mass of the satellite, $r$ is the radial distance, $\psi$ is the polar angle from the horizontal line, $u_r$ is the thrust of radial direction, $u_\theta$ is the thrust of tangential direction, and $k$ is a constant related the force field.
We assume that the satellite does not have a rotational motion by the external force, i.e., the behavior of the satellite is decided by only the control inputs $u_1$ and $u_2$, and the outputs $r$ and $\psi$ can be measured.
The relation between $(u_r, u_{\psi})$ and $(u_1, u_2)$ is described  as
\begin{equation}	\label{eq:rotation}
\begin{bmatrix}u_r \\ u_\psi \end{bmatrix} = \begin{bmatrix}\cos\theta(t) & \sin\theta(t) \\ -\sin\theta(t) & \cos\theta(t) \end{bmatrix} \begin{bmatrix}u_1 \\ u_2 \end{bmatrix} =: R(\theta(t))u.
\end{equation}
We suppose that $\theta(t) = \theta_o(t) + \tilde\theta(t)$ where $\theta_0(t)$ is known but $\tilde\theta(t)$ is unknown and satisfies $|\tilde\theta(t)| \le c_{\tilde\theta}$ where the known constant $c_{\tilde\theta}$ is given by $|\tilde\theta(t)|<\pi/5$. 
It is noted that, without a controller $(u_1 = u_2 = 0)$, the system \eqref{eq:example} admits solutions $r(t) = r_*$ and $\psi(t) = \omega_* t$ with $k = r_*^3 \omega_*^2$, where $r_* > 0$ and $\omega_*$ are constants.

\begin{figure}[]
\centering
\begin{tikzpicture}
\draw (0,0) circle (6mm) [line width = 1.5pt];
\draw (35:2cm) circle (4mm) node[xshift = 1mm, yshift = 1mm] {$m$} [line width = 1.5pt];
\draw[dotted] (0,0) -- (35:2cm) node[midway, yshift = 1.25ex]{$r$} [line width = 1pt];
\draw[dotted] (0,0) -- (2.5cm,0) [line width = 1pt];
\draw[dotted] (-10:2cm) arc (-10:110:2) [line width = 1pt];
\fill[black] (35:2cm)++(-4mm,0) -- ++(135:2mm) -- ++(0,-2.83mm) -- cycle [line width = 1.5pt];
\fill[black] (35:2cm)++(4mm,0) -- ++(45:2mm) -- ++(0,-2.83mm) -- cycle [line width = 1.5pt];
\fill[black] (35:2cm)++(0,-4mm) -- ++(225:2mm) -- ++(2.83mm,0) -- cycle [line width = 1.5pt];
\fill[black] (35:2cm)++(0,4mm) -- ++(135:2mm) -- ++(2.83mm,0) -- cycle [line width = 1.5pt];
\draw[->] (35:2cm)++(35:4mm) -- (35:3.25cm) node[yshift = 1ex]{$u_r$} [line width = 1.5pt];
\draw[->] (35:2cm)++(125:4mm) -- +(125:0.75cm) node[yshift = 1ex]{$u_{\theta}$} [line width = 1.5pt];
\draw[->] (35:2cm)++(0,4mm) -- ++(0,6.75mm) node[yshift = 1ex]{$u_2$} [line width = 1.5pt];
\draw[->] (35:2cm)++(4mm,0) -- ++(6.75mm,0) node[xshift = 1.5ex]{$u_1$} [line width = 1.5pt];
\draw[->] (0.75,0) arc (0:35:0.75) node[midway, xshift = 1.25ex]{$\psi$} [line width = 1.5pt];
\draw[->] (35:2cm)++(7.5mm,0) arc (0:35:7.5mm) node[midway, xshift = 1.25ex]{$\theta$} [line width = 1.5pt];
\end{tikzpicture}
\caption{Example: a point mass satellite.}\label{fig:example}
\end{figure}
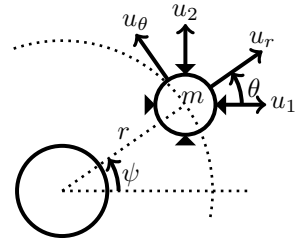
\begin{figure}[]
\begin{center}
\includegraphics[width=7.25cm]{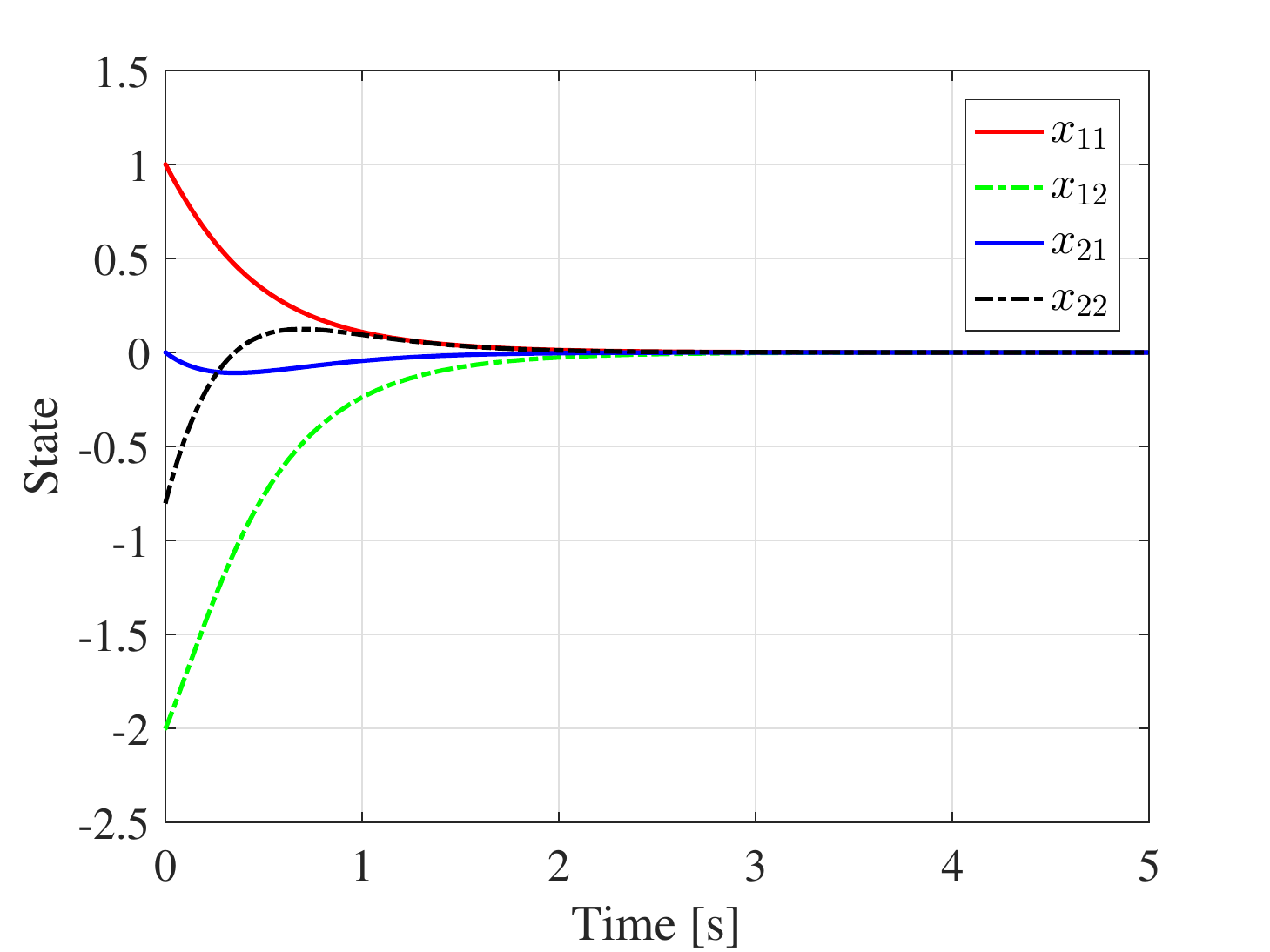}
\vspace{-2.5mm}
\caption{Performance of the nominal closed-loop system to contain the state feedback controller.}
\label{fig:Nominal}
\vspace{-1mm}
\end{center}
\end{figure}
\begin{figure}[]
\begin{center}
\includegraphics[width=7.25cm]{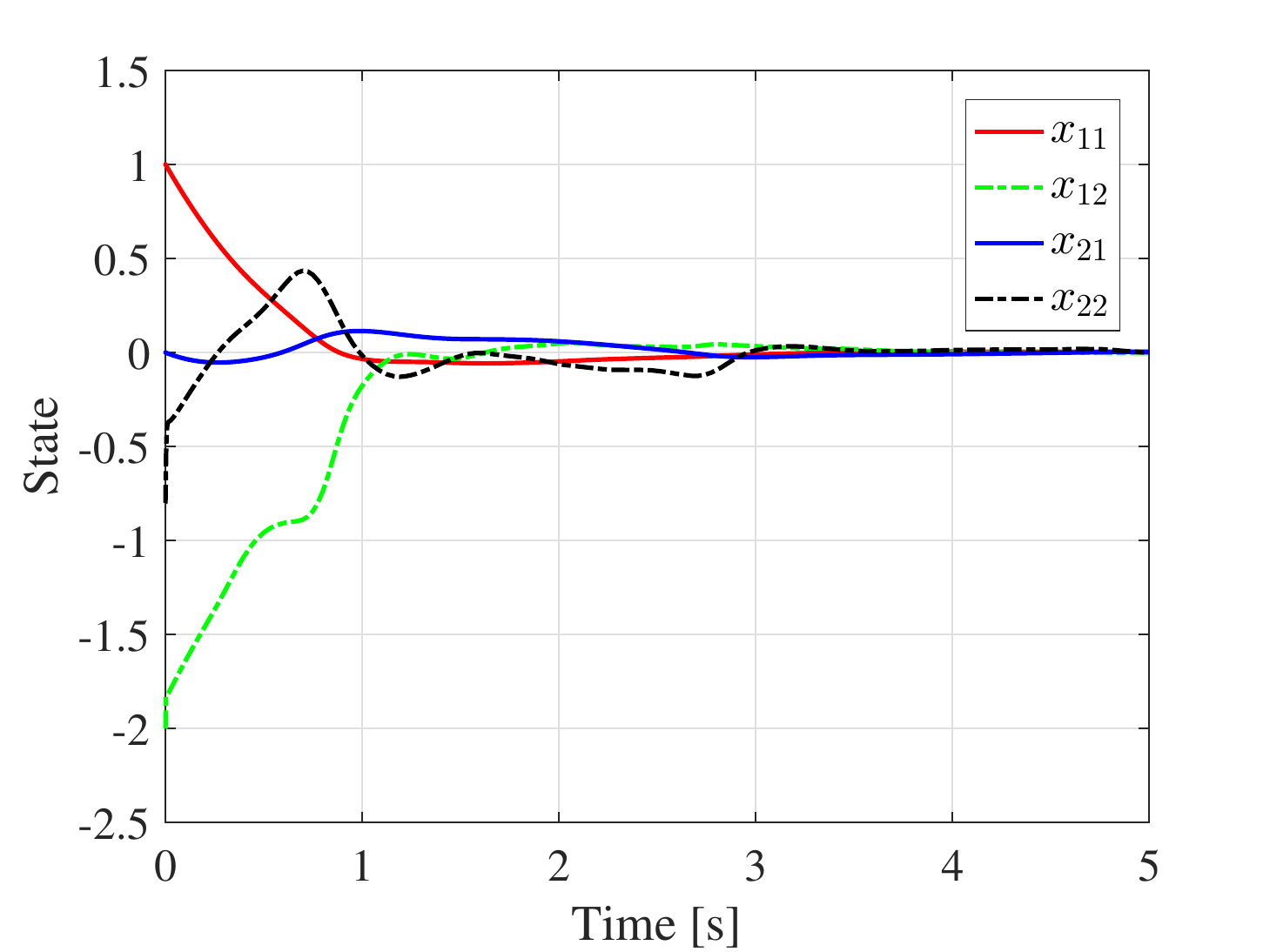}
\vspace{-2.5mm}
\caption{State trajectories of the closed-loop system under the proposed controller with a constant nominal gain matrix $\bar G_c$.}
\label{fig:Constant}
\vspace{-1mm}
\end{center}
\end{figure}
\begin{figure}[]
\begin{center}
\includegraphics[width=7.25cm]{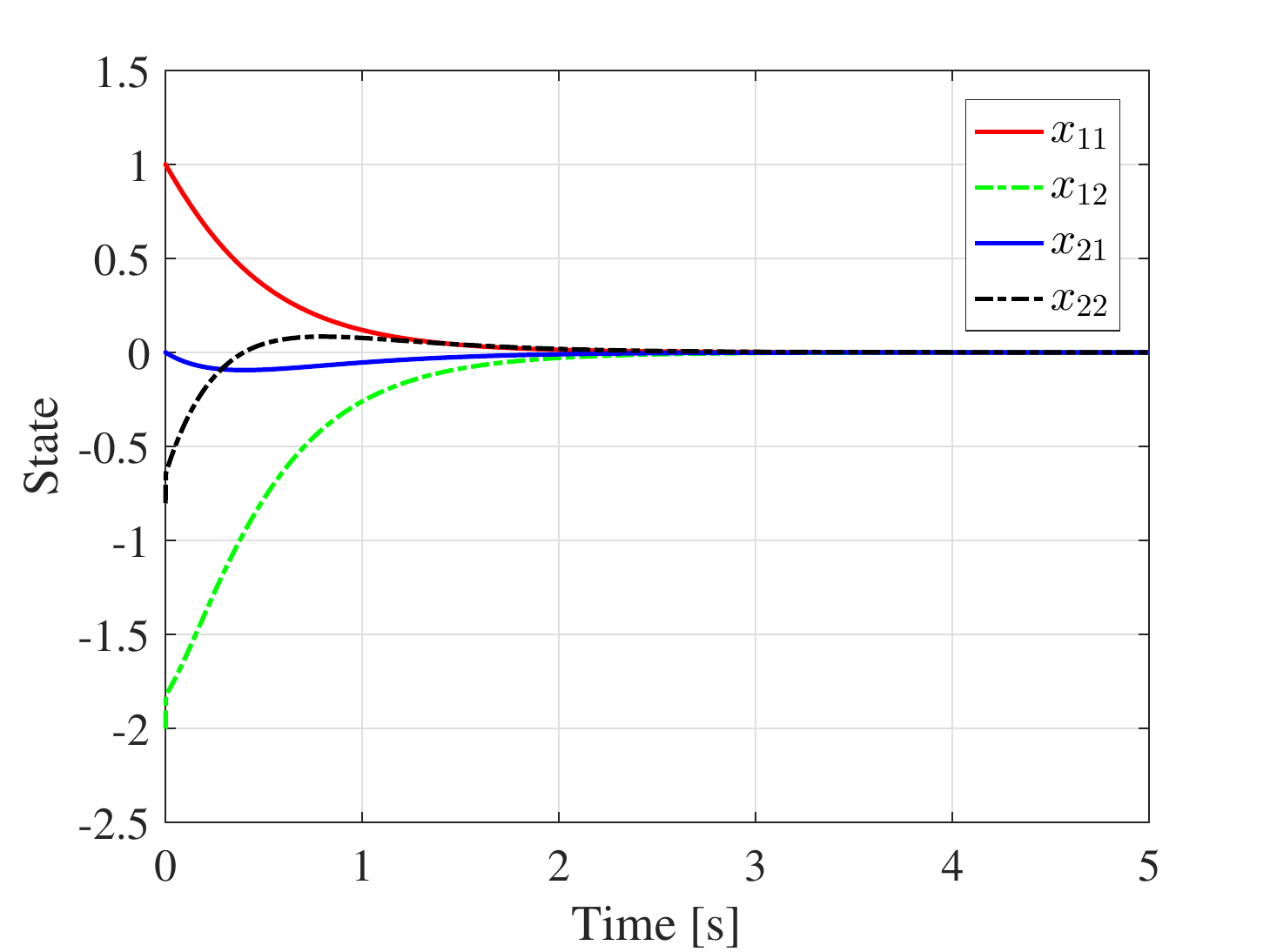}
\vspace{-2.5mm}
\caption{State trajectories of the closed-loop system under the proposed controller with a nonlinear input gain matrix  $\bar G(x,t)$.} 
\label{fig:DOB}
\vspace{-1mm}
\end{center}
\end{figure}

Define $x_{11} = r-r_*$, $x_{12} = v$, $x_{21} = r_*(\psi - \omega_* t)$, and $x_{22} = r_*(\omega - \omega_*)$.
Then, the dynamics becomes
\begin{equation}	\label{eq:transform_example}
\begin{split}
\dot x &= Ax + B(F(x) + G(x,t)u) \\
y &= Cx \\
F(x) &= \begin{bmatrix} (x_{11}+r_*)\left( \frac{x_{22}}{r_*} + \omega_* \right)^2 - \frac{k}{(x_{11}+r_*)^2} \\ -\frac{2x_{12} r_*}{x_{11}+r_*} \left( \frac{x_{22}}{r_*} + \omega_* \right) \end{bmatrix} \\
G(x,t) &= \begin{bmatrix} \frac{1}{m} & 0 \\ 0 & \frac{r_*}{m(x_{11}+r_*)}  \end{bmatrix} R(\theta(t)).
\end{split}
\end{equation}
The relative degree vector $\nu$ of the system is $[2;2]$.
To design the controller, we choose the input gain matrix $\bar G$ such that $\bar G(x,t) = {\rm diag}\{ 1/\bar m , r_*/\bar m(x_{11}+r_*) \} R(\theta_o(t))$.
Suppose the state feedback controller for the nominal system is given by 
$U_r = \bar G^{-1}(x,t)(-F(x) - K x)$.

The simulation parameters are given as follows.
$\bar m = 1$, $m = 1.2$, $k = 5$, $r_* = 1.5$, $\omega_* = \sqrt{k/r_*^3}$, $\theta_o = (\pi/2) \sin (\pi t)$, $\tilde\theta = (\pi/5) \sin (4\pi t)$, $|sI-(A-BK)|= (s+1)(s+3)^2(s+5)$, $\mu = 0.001$, $a_{11} = a_{21} = 15$, $a_{12} = a_{22} = 8$, $x(0) = [1;-2;0;-0.8]$.
The levels of saturation functions $\Phi$ and $\phi$ are taken as $100$ and $25$, respectively.

Fig. \ref{fig:Nominal} shows the state trajectories of the nominal closed-loop system under  the state feedback controller $U_r$. It is seen that the states of nominal system converge to zero.

In Fig. \ref{fig:Constant}, the state trajectories of the closed-loop system under the proposed controller with a constant nominal gain matrix $\bar G_c = {\rm diag}\{1/m, 1/m\}$ is shown and it is seen that the trajectories are distorted from the nominal ones, but the trajectory converges to the origin with very small error. 

In Fig. \ref{fig:DOB}, the proposed strategy having the nonlinear gain matrix $\bar G(x,t)$ mentioned above is applied to the actual system. One can see that, the proposed controller makes the system stable and recovers the performance of the nominal system despite the presence of system uncertainties.

Fig. \ref{fig:Input} shows the comparison of control inputs used in the simulations corresponding to Figs. \ref{fig:Constant} and \ref{fig:DOB}.
As expected, the controller with a nonlinear nominal input gain matrix requires smaller control effort than the other.

\section{Conclusions}	\label{sec:conclusion}
In this paper, we have presented a robust output feedback controller for MIMO nonlinear systems having an uncertain input gain matrix. We allow that the nominal input gain matrix can be a nonlinear function of state, which is often the case in practical systems. By proposing a new controller structure, it is shown that the performance of the state feedback controller, designed for the nonlinear nominal system, is recovered in the steady state. Compared to previous results, the proposed controller can reduce unnecessarily large control effort that comes from the difference between the actual input gain and somewhat artificial linear nominal input gain. For the future work, we will apply the proposed controller to practical systems and generalize the result to consider system uncertainties and external disturbances.
\begin{figure}[]
\begin{center}
\includegraphics[width=7.25cm]{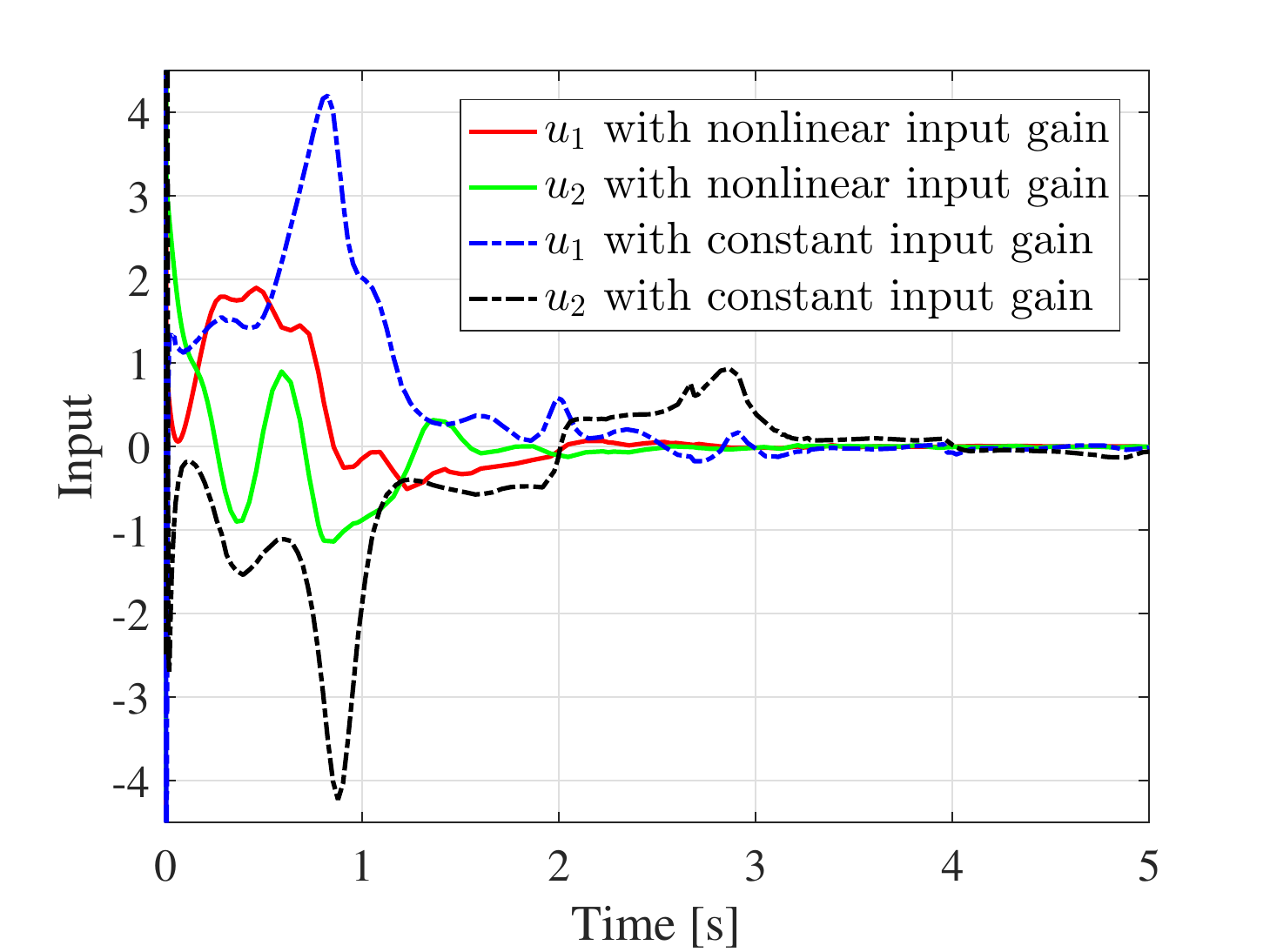}
\vspace{-2.5mm}
\caption{Comparison of control inputs using the proposed controller: constant input gain versus nonlinear input gain.} 
\label{fig:Input}
\vspace{-5mm}
\end{center}
\end{figure}
\bibliographystyle{IEEEtran}
\bibliography{References}{}             

\begin{thebibliography}{10}
\providecommand{\url}[1]{#1}
\csname url@samestyle\endcsname
\providecommand{\newblock}{\relax}
\providecommand{\bibinfo}[2]{#2}
\providecommand{\BIBentrySTDinterwordspacing}{\spaceskip=0pt\relax}
\providecommand{\BIBentryALTinterwordstretchfactor}{4}
\providecommand{\BIBentryALTinterwordspacing}{\spaceskip=\fontdimen2\font plus
\BIBentryALTinterwordstretchfactor\fontdimen3\font minus
  \fontdimen4\font\relax}
\providecommand{\BIBforeignlanguage}[2]{{%
\expandafter\ifx\csname l@#1\endcsname\relax
\typeout{** WARNING: IEEEtran.bst: No hyphenation pattern has been}%
\typeout{** loaded for the language `#1'. Using the pattern for}%
\typeout{** the default language instead.}%
\else
\language=\csname l@#1\endcsname
\fi
#2}}
\providecommand{\BIBdecl}{\relax}
\BIBdecl

\bibitem{Kailath1980}
T.~Kailath, \emph{Linear Systems}.\hskip 1em plus 0.5em minus 0.4em\relax
  Prentice-Hall, 1980.

\bibitem{B_Khalil}
H.~K. Khalil, \emph{Nonlinear Systems, {\em Third Ed}}.\hskip 1em plus 0.5em
  minus 0.4em\relax Prentice-Hall, Upper Saddle River, NJ, 2002.

\bibitem{bIsidori}
A.~Isidori, \emph{Nonlinear Control Systems, {\em Third Ed}}.\hskip 1em plus
  0.5em minus 0.4em\relax Springer-Verlag, New York, 1995.

\bibitem{Teel94}
A.~Teel and L.~Praly, ``Global stabilizability and observability imply
  semi-global stabilizability by output feedback,'' \emph{Systems \& Control
  Letters}, vol.~22, pp. 313--325, 1994.

\bibitem{AK1999}
A.~N. Atassi and H.~K. Khalil, ``A separation principle for the stabilization
  of a class of nonlinear systems,'' \emph{IEEE Trans. Automat. Contr.},
  vol.~44, no.~9, pp. 1672--1687, 1999.

\bibitem{Khalil93}
H.~K. Khalil and F.~Esfandiari, ``Semiglobal stabilization of a class of
  nonlinear systems using output feedback,'' \emph{IEEE Trans. Automat.
  Contr.}, vol.~38, no.~9, pp. 1412--1415, 1993.

\bibitem{Teel94SIAM}
A.~Teel and L.~Praly, ``Tools for semiglobal stabilization by partial state and
  output feedback,'' \emph{SIAM Journal of Control Optimal}, vol.~33, no.~5,
  pp. 1443--1488, 1994.

\bibitem{Lin1995TAC}
Z.~Lin and A.~Saberi, ``Robust semiglobal stabilization of minimum-phase
  input-output linearizable systems via partial state and output feedback,''
  \emph{IEEE Transactions on Automatic Control}, vol.~40, no.~6, pp.
  1029--1041, 1995.

\bibitem{FK2008}
L.~B. Freidovich and H.~K. Khalil, ``Performance recovery of
  feedback-linearization-based designs,'' \emph{IEEE Transactions on Automatic
  Control}, vol.~53, no.~10, pp. 2324--2334, 2008.

\bibitem{Nussbaum1983}
R.~Nussbaum, ``Some remarks on a conjecture in parameter adaptive control,''
  \emph{Systems \& Control Letters}, vol.~3, no.~5, pp. 243--246, 1983.

\bibitem{KKK95}
M.~Krsti{\'c}, I.~Kanellakopoulos, and P.~V. Kokotovi{\'c}, \emph{Nonlinear and
  Adaptive Control Design}.\hskip 1em plus 0.5em minus 0.4em\relax Wiley, New
  York, 1995.

\bibitem{Xu2003TAC}
H.~Xu and P.~A. Ioannou, ``Robust adaptive control for a class of {MIMO}
  nonlinear systems with guaranteed error bounds,'' \emph{IEEE Transactions on
  Automatic Control}, vol.~48, no.~5, pp. 728--742, 2003.

\bibitem{Ge2014TAC}
S.~S. Ge and Z.~Li, ``Robust adaptive control for a class of {MIMO} nonlinear
  systems by state and output feedback,'' \emph{IEEE Transactions on Automatic
  Control}, vol.~59, no.~6, pp. 1624--1629, 2014.

\bibitem{Wang2015}
L.~Wang, A.~Isidori, and H.~Su, ``Output feedback stabilization of nonlinear
  {MIMO} systems having uncertain high-frequency gain matrix,'' \emph{Systems
  \& Control Letters}, vol.~83, pp. 1--8, 2015.

\bibitem{BackShim2009}
J.~Back and H.~Shim, ``An inner-loop controller guaranteeing robust transient
  performance for uncertain {MIMO} nonlinear systems,'' \emph{IEEE Transactions
  on Automatic Control}, vol.~54, no.~7, pp. 1601--1607, 2009.

\bibitem{Ohnishi87}
K.~Ohnishi, ``A new servo method in mechatronics,'' \emph{Trans. of Japanese
  Society of Electrical Engineers}, vol. 107-D, pp. 83--86, 1987.

\bibitem{Chen+2016TIE}
W.~H. Chen, J.~Yang, L.~Guo, and S.~Li, ``Disturbance-observer-based control
  and related methods-an overview,'' \emph{IEEE Transactions on Industrial
  Electronics}, vol.~63, no.~2, pp. 1083--1095, 2016.

\bibitem{Shim+2016CTT}
H.~Shim, G.~Park, Y.~Joo, J.~Back, and N.~Jo, ``Yet another tutorial of
  disturbance observer: Robust stabilization and recovery of nominal
  performance,'' \emph{Control Theory and Technology}, vol.~14, no.~3, pp.
  237--249, 2016.

\bibitem{Sontag95}
E.~D. Sontag and Y.~Wang, ``On characterizations of the input-to-state
  stability property,'' \emph{Systems \& Control Letters}, vol.~5, no.~24, pp.
  351--359, 1995.

\end{thebibliography}
\end{document}